%
%
%
%
%
%
%
\documentclass[aip,jmp, amsmath,amssymb,reprint]{revtex4-1} 
\usepackage{hyperref} 
%
%
\usepackage[utf8]{inputenc}
\usepackage{enumerate}
\usepackage{amsmath}
\usepackage{amssymb}
\usepackage{ stmaryrd }
\usepackage{amssymb}
\usepackage{amsthm}
\usepackage{graphicx}
\usepackage{dcolumn}
\usepackage{bm}
\usepackage{mathtools}
\usepackage{enumerate}

\begin{document}
\theoremstyle{plain} 
\newtheorem{thm}{Theorem}[section]
\newtheorem{prop}[thm]{Proposition}
\newtheorem{lem}[thm]{Lemma}
\newtheorem{cor}[thm]{Corollary}

\theoremstyle{definition}
\newtheorem{defn}[thm]{Definition} 

\newtheorem{example}[thm]{Example}
\newtheorem{query}[thm]{Question}
\theoremstyle{remark}
\newtheorem{remark}[thm]{Remark}

\numberwithin{equation}{section}

%
%

\renewcommand{\epsilon}{\varepsilon}
\renewcommand{\phi}{\varphi}
\renewcommand{\setminus}{\smallsetminus}
\renewcommand{\theequation}{ \thesection.\arabic{equation}}
\renewcommand{\thesection}{\arabic{section}}

\preprint{AIP/123-QED}

\title[Quantum algorithmic randomness]{Quantum algorithmic randomness}

\author{Tejas Bhojraj}

\affiliation{ 
Department of Mathematics, University of Wisconsin-Madison, USA.
}  
 \email{bhojraj@math.wisc.edu}

\date{1 February 2020}
\revised{\today}

\begin{abstract}
Quantum Martin-L{\"o}f randomness (q-MLR) for infinite qubit sequences was introduced by Nies and Scholz. We define a notion of quantum Solovay randomness which is equivalent to q-MLR. The proof of this goes through a purely linear algebraic result about approximating density matrices by subspaces. We then show that random states form a convex set. Martin-L{\"o}f absolute continuity is shown to be a special case of q-MLR. Quantum Schnorr randomness is introduced. A quantum analogue of the law of large numbers is shown to hold for quantum Schnorr random states.
\end{abstract}

\keywords{qubits, Turing machine, computable function, Martin-L{\"o}f  randomness, law of large numbers.  }
\maketitle

\section{Introduction}
\subsection{Martin-L{\"o}f  randomness in classical and quantum settings}
Information theory has been generalized to the quantum realm \cite{Nielsen:2011:QCQ:1972505}. Similarly, the theory of computation has been extended to the quantum setting; a notable example being the conception of a quantum Turing machine \cite{Mller2007QuantumKC,doi:10.1137/S0097539796300921}. It hence seems natural to extend Kolmogorov complexity theory and algorithmic randomness, areas using notions from computation and information, to the quantum realm. While classical Kolmogorov complexity has inspired many competing definitions of quantum Kolmogorov complexity \cite{Berthiaume:2001:QKC:2942985.2943376,Mller2007QuantumKC,Vitnyi2001QuantumKC}, algorithmic randomness has only recently been extended to the quantum setting \cite{unpublished}.

What does algorithmic randomness study? Consider infinite sequences of ones and ze
roes (called  bitstrings in this paper). First consider the bitstring $101010101010\cdots$. It has an easily describable `pattern' to it; namely that the ones and zeroes alternate. Now take a bitstring obtained by tossing a fair coin repeatedly. Intuitively, it seems that the second bitstring, in contrast to the first, is unlikely to have patterns. A central theme in algorithmic randomness is to quantify our intuition that the second bitstring is more `random', more `structureless' than the first. While algorithmic randomness is concerned with the randomness of bitstrings, the present paper is concerned with randomness of \emph{qubitstrings} (infinite sequences of qubits); a line of inquiry initiated by Nies and Scholz \cite{unpublished}. The basic definitions from algorithmic randomness we state below may be found in books by Nies\cite{misc1} and Downey and Hirschfeldt\cite{misc}. Roughly speaking, a Martin-L{\"o}f random bitstring is one which has no algorithmically describable regularities. Slightly more rigorously, an infinite bitstring is said to be Martin-L{\"o}f random if it is not in any `effectively null' set. In the context of Martin-L{\"o}f randomness, a measurable set $A$ is effectively null if there is a Turing machine which computes a sequence of open sets, $(U_n)_n$ such that the measure of $U_n$ is at most $2^{-n}$ and $A \subseteq U_n$ for all $n$. By varying the definition of `effectively null', we get other notions of randomness like Solovay randomness and Schnorr randomness. Note that the randomness of a bitstring crucially depends on the notion of computability. In a broad sense, a bitstring is random if it has no `\emph{computably describable}' patterns. Consequently, the following notion is pertinent to us; a function, $f$ on the natural numbers is said to be \emph{computable} if there is a Turing machine, $\phi$ such that on input $n$, $\phi$ halts and outputs $f(n)$.
\begin{defn}
A sequence $(a_n)_{n\in \mathbb{N}}$ is said to be \emph{computable} if there is a computable function $f$, such that $f(n)=a_n$.

\end{defn}
The notion of a computable real number will come up when we discuss quantum Schnorr randomness. 
\begin{defn}
A real number $r$ is said to be \emph{computable} if there is a computable function $f$ such that for all $n$, $|f(n)-r|<2^{-n}$.
\end{defn}
We describe how classical algorithmic randomness generalizes to qubitstrings.
We refer the reader to the book by Nielsen and Chuang\cite{Nielsen:2011:QCQ:1972505} for preliminaries on quantum theory.
While it is clear what one means by a infinite sequence of bits, it is not immediately obvious how one would formalize the notion of an infinite sequence of qubits. To describe this, many authors have independently come up with the notion of a \emph{state} \cite{unpublished,article,brudno}. We will need the one given by Nies and Scholz \cite{unpublished}. A positive semidefinite matrix with trace equal to one is a density matrix and is commonly used to represent a probabilistic mixture of pure quantum states.
\begin{defn}
A \emph{state}, $\rho=(\rho_n)_{n\in \mathbb{N}}$ is an 
infinite sequence of density matrices such that $\rho_{n} \in \mathbb{C}^{2^{n} \times 2^{n}}$ and $\forall n$,  $PT_{\mathbb{C}^{2}}(\rho_n)=\rho_{n-1}$.
\end{defn}
The idea is that $\rho$ represents an infinite sequence of qubits whose first $n$ qubits are given by $\rho_n$. Here, $PT_{\mathbb{C}^{2}}$ denotes the partial trace which `traces out' the last qubit from $\mathbb{C}^{2^n}$. The definition requires $\rho$ to be coherent in the sense that for all $n$, $\rho_n$, when `restricted' via the partial trace to its first $n-1$ qubits, has the same measurement statistics as the state on $n-1$ qubits given by $\rho_{n-1}$.
The following state will be the quantum analogue of Lebesgue measure as explained later in remark \ref{rem:tr}.
\begin{defn}\cite{unpublished}
\label{def:tr}
Let $\tau=(\tau_n)_{n\in \mathbb{N}}$ be the state given by setting $\tau_n = \otimes_{i=1}^n I$ where $I$ is the two by two identity matrix.
\end{defn}
\begin{defn}\cite{unpublished} 
A \emph{special projection} is a hermitian projection matrix with complex algebraic entries.
\end{defn}
Since the complex algebraic numbers (roots of polynomials with rational coefficients) have a computable presentation, we may identify a special projection with a natural number and hence talk about computable sequences of special projections.
\begin{defn}\cite{unpublished}
\label{defn:sigclass}
A quantum $\Sigma_{1}^0$
set (or q-$\Sigma_{1}^0$
set for short) G is a computable
sequence of special projections $G=(p_{i})_{i\in \mathbb{N}}$ such that $p_i$ is $2^i$ by $2^i$ and range$(p_i \otimes I) \subseteq$ range $(p_{i+1})$ for all $i\in \mathbb{N}$. 

\end{defn}
\begin{defn}\cite{unpublished} 
If $\rho$ is a state and $G=(p_{n})_{n\in \mathbb{N}}$ a q-$\Sigma_{1}^0$
set as above, then $\rho(G):=\lim_{n}$ Tr$(\rho_{n}p_n)$.
\end{defn}

While a $2^n$ by $2^n$ special projection may be thought of as a computable projective measurement on a system of $n$ qubits, a q-$\Sigma_{1}^0$
class corresponds to a computable sequence of projective measurements on longer and longer systems of qubits. We motivate the definition of a quantum $\Sigma_{1}^0$
set by relating it to the classical $\Sigma_{1}^0$
class. Let $2^{\omega}$, called Cantor space, denote the collection of infinite bitstrings, let $2^n$ denote the set of bit strings of length $n$, $2^{<\omega}= \bigcup_{n} 2^n$, and let $2^{\leq\omega}:=  2^{<\omega} \cup 2^{ \omega}$. Cantor space can be topologized by declaring the cylinders to be the basic open sets. If $\pi \in 2^n$ for some $n$, then the cylinder generated by $\pi$, denoted $\llbracket \pi \rrbracket$, is the set of all sequences extending $\pi$:\[\llbracket \pi \rrbracket=\{X \in 2^{\omega}: X\upharpoonright n = \pi\}.\]
If $C \subseteq 2^n$, let \[\llbracket C \rrbracket := \bigcup_{\pi \in C} \llbracket \pi \rrbracket,\] be the set of all $X \in 2^{\omega}$ such that the initial segment of $X$ of length $n$ is in $C$. One of the many equivalent ways of defining a $\Sigma_{1}^0$
class is as follows. 
\begin{defn}
\label{def:10}
A $\Sigma_{1}^0$ class $S \subseteq 2^{\omega}$ is any set of the form, \[S=\bigcup_{i\in \mathbb{N}} \llbracket A_{i} \rrbracket\] where
    \begin{enumerate}
        \item $A_{i} \subseteq 2^i, \forall i\in \mathbb{N}$ 
        \item
        The indices of $A_{i}$ form a computable sequence. (Being a finite set, each $A_i$ has a natural number coding it.)
        \item $\llbracket A_{i} \rrbracket\subseteq \llbracket A_{i+1} \rrbracket, \forall i\in \mathbb{N}$.
    \end{enumerate}
\end{defn}
Letting $\llbracket A_{i} \rrbracket:= S_i$, we write $S=(S_i )_i$. A $\Sigma_{1}^0$ class, S is coded (non-uniquely) by the index of the total computable function generating the sequence $(A_{i})_{ i\in \mathbb{N}}$ occurring in (2) in the definition of $S$. Hence, the notion of a computable sequence of $\Sigma_{1}^0$ classes makes sense.
\begin{remark}\label{rem:tr}
The special projections $p_i$s in  Definition \ref{defn:sigclass} play the role of the $A_i$s which generate a the $\Sigma_{1}^0$ class, $S$ and that rank$(p_i)$ plays the role of $|A_i|$. Note that for $G=(p_{n})_{n\in \mathbb{N}}$, \emph{$\tau(G)=\lim_{n}(2^{-n}|p_n|)$} where, $|p_n|$ is the rank of $p_n$. So, $\tau(G)$ is the quantum analog of the Lebesgue measure of $S$ which equals $\lim_{n}(2^{-n}|A_n|)$.
\end{remark} 
Informally, and somewhat inaccurately, a q-$\Sigma_{1}^0$ class,  $G=(p_{n})_{n\in \mathbb{N}}$, may be thought of as a projective measurement whose expected value, when `measured' on a state $\rho=(\rho_{n})_{n\in \mathbb{N}}$ is $\rho(G)=\lim_{n}$ Trace $(\rho_{n}p_n)$. In reality, a q-$\Sigma_{1}^0$ class, $G=(p_{n})_n$, is a sequence of projective measurements on larger and larger finite dimensional complex Hilbert spaces. This sequence can be used to `measure' a coherent sequence of density matrices (i.e., a state) the expected value of which is the limit of the Trace$(\rho_{n}p_n)$ (the expected value of measuring the $n^{th}$ `level').

\begin{defn}
A \emph{classical Martin-L{\"o}f test} (MLT) is a computable sequence, $(S_{m})_{m \in \mathbb{N} }$ of $\Sigma_{1}^0$ classes such that the Lebesgue measure of $S_m$ is less than or equal to $2^{-m}$ for all m.
\end{defn}
By remark \ref{rem:tr}, the state $\tau$ is, roughly speaking, the quantum version of the Lebesgue measure. So, the quantum generalization of the classical MLT is:
\begin{defn}\cite{unpublished}
A \emph{quantum Martin-L{\"o}f test} (q-MLT) is a computable sequence, $(S_{m})_{m \in \mathbb{N}}$ of q-$\Sigma_{1}^0$ classes such that $\tau(S_m)$ is less than or equal to $2^{-m}$ for all m, where $\tau$ is as in Definition \ref{def:tr}.
\end{defn}

\begin{defn}\cite{unpublished}
$\rho$ is \emph{q-MLR} if for any q-MLT $(S_{m})_{m \in \mathbb{N}}$, $\inf_{m \in \mathbb{N}}\rho(S_m)=0$.
\end{defn}
Roughly speaking, a state is q-MLR if it cannot be `detected by projective measurements of arbitrarily small rank'.
\begin{defn}\cite{unpublished}
$\rho$ is said to \emph{fail  the q-MLT $(S_{m})_{m \in \mathbb{N}}$, at order $\delta$}, if $\inf_{m \in \mathbb{N}}\rho(S_m)>\delta$. $\rho$ is said to \emph{pass  the q-MLT $(S_{m})_{m \in \mathbb{N}}$ at order $\delta$} if it does not fail it at $\delta$. 
\end{defn}
So, $\rho$ is q-MLR if it passes all q-MLTs at all $\delta>0$.
\begin{remark}
\label{rem:notation}
A few remarks on notation: By `bitstring', we mean a finite or infinite classical sequence of ones and zeroes. It will be clear from context whether the specific bitstring under discussion is finite or infinite. $2^n$ will denote the set of bitstrings of length $n$. Let $B^n$ denote the standard computational basis for $\mathbb{C}^{2^n}$. I.e., $B^n:= \{|\sigma\big> : \sigma \in 2^n\} $. If $S\subseteq 2^n$, let $P_{S} :=\sum_{\sigma \in S} |\sigma\big>\big<\sigma|$. `Tr' stands for trace. A sequence of q-$\Sigma_{1}^0$ classes will be indexed by the superscript. The subscript will index the sequence of special projections comprising a q-$\Sigma_{1}^0$. For example, $(S^{m})_{m \in \mathbb{N}}$ is a sequence of q-$\Sigma_{1}^0$ classes and  $S^{m}=(S^{m}_{n})_{m \in \mathbb{N}}$ is a class from the sequence. So, a sequence of q-$\Sigma_{1}^0$ classes can be thought of as a double sequence of special projections: $(S^{m}_{n})_{m,n \in \mathbb{N}}$. Lebesgue measure is denoted by $\mu$.
\end{remark}
\subsection{Overview}
This paper has two major themes. First, it continues the study of quantum Martin-L{\"o}f randomness initiated by Nies and Scholz \cite{unpublished}. Second, we define quantum Solovay and quantum Schnorr randomness and prove results concerning these notions. Along with  Martin-L{\"o}f randomness, Solovay randomness and Schnorr randomness are important classical randomness notions. While Solovay randomness is equivalent to MLR, Schnorr randomness is strictly weaker. In Section \ref{sec:SolSchn}  we define quantum Solovay and quantum Schnorr randomness, show that quantum Solovay randomness is equivalent to q-MLR, show the convexity of the randomness classes in the space of states (answering open questions\cite{unpublished1, unpublished}), and obtain results regarding q-MLR states. The equivalence of quantum Solovay and quantum Martin-L{\"o}f randomness turns out to be a corollary of Theorem \ref{thm:LinAStr}, a linear algebraic result of independent interest concerning the approximation of density matrices by subspaces. This result, to the best of our knowledge, is novel and may prove useful in areas where approximations to density matrices are used; for example, quantum information and error correction,  quantum Kolmogorov complexity \cite{Mller2007QuantumKC,Berthiaume:2001:QKC:2942985.2943376} and quantum statistical mechanics.

In Section \ref{sec:measures}, we study states which are coherent sequences of diagonal density matrices. These states can be thought of as probability measures on Cantor space. Nies and Stephan\cite{unpublished2} defined  Martin-L{\"o}f absolutely continuity and Solovay randomness for diagonal states. We show that these two notions are the restrictions of q-MLR and quantum Solovay randomness to the space of diagonal states. We prove a result (Lemma \ref{lem:30}) about approximating a subspace of small rank by another one with a different orthonormal spanning set and of appropriately small rank. This result, novel as far as we know, may be applied to the important problem of approximating an entangled subspace (a subspace spanned by entangled pure states) by one spanned by product tensors \cite{demianowicz2019entanglement,book2}. We then discuss the quantum randomness of classical states in subsection \ref{schnorr}. Quantum Schnorr randomness would not be a natural quantum randomness notion were it not for Lemma \ref{thm:schcla} which shows that quantum Schnorr randomness agrees with classical Schnorr randomness on the classical states (i.e., states induced by infinite bitstrings).

Nies and Tomamichel \cite{logicblog} showed that q-MLR states satisfy a quantum version of the law of large numbers. In Section  \ref{sec:lln} we strengthen this by showing that in fact, all quantum Schnorr random states (a set strictly containing the q-MLR states) satisfy the law of large numbers.
\section{Notions of Quantum algorithmic randomness}
\label{sec:SolSchn}

\subsection{Solovay and Schnorr randomness}
An infinite bitstring $X$ is said to \emph{pass} the Martin-L{\"o}f test $(U_{n})_n$ if $X \notin \bigcap_{n}U_n$ and is said to be \emph{Martin-L{\"o}f random (MLR)} if it passes all Martin-L{\"o}f tests. A related randomness notion is Solovay randomness. A computable sequence of $\Sigma^{0}_1$ classes, $(S_{n})_n$ is a \emph{Solovay test} if $\sum_{n} \mu(S_{n})$, the sum of the Lebesgue measures is finite. An infinite bitstring $X$ \emph{passes} $(S_{n})_n$ if $X\in S_n$ for infinitely many $n$. It is a remarkable fact that $X$ is MLR if and only if it passes all Solovay tests. Is this also true in the quantum realm? Nies and Scholz asked \cite{unpublished1} if there is a notion of a quantum Solovay test and if so, is quantum Martin-L{\"o}f randomness equivalent to passing all quantum Solovay tests. We answer this question in the affirmative by defining a quantum Solovay test and quantum Solovay randomness as follows.
Roughly speaking, we obtain a notion of  a quantum Solovay test by replacing `$\Sigma^{0}_1$ class' and `Lebesgue measure' in the definition of classical Solovay tests with `quantum-$\Sigma^{0}_1$ set' and $\tau$ respectively. We show below that quantum Solovay Randomness is equivalent to q-MLR.

\begin{defn}\label{defn:1} 
A uniformly computable sequence of quantum-$\Sigma^{0}_1$ sets, $(S^{k})_{k\in\omega}$ is a \emph{ quantum-Solovay test} if  $\sum_{k\in \omega} \tau(S^{k}) <\infty.$\end{defn}\begin{defn}\label{defn:2}
For $0<\delta<1$, a state $\rho$ \emph{fails the Solovay test $(S^k)_{k\in\omega}$ at level $\delta$} if there are infinitely many $k$ such that $\rho(S^k)>\delta$.
\end{defn}
\begin{defn}
A state $\rho$ \emph{passes the Solovay test $(S^k)_{k\in\omega}$} if for all $\delta>0$, $\rho$ does not fail $(S^k)_{k\in\omega}$ at level $\delta$. I.e, lim$_{k}\rho(S^{k})=0$.
\end{defn}
\begin{defn}
A state $\rho$ is \emph{quantum Solovay random} if it passes all quantum Solovay tests.\end{defn}
An \emph{interval Solovay test}\cite{misc1} is a Solovay test, $(S_{n})_n$ such that each $S_n$ is generated by a finite collection of strings. By 7.2.22 in the book by Downey and Hirschfeldt \cite{misc}, a Schnorr test may be defined as:
\begin{defn}
A \emph{Schnorr test} is an interval Solovay test, $(S^{m})_{m}$ such that $\sum_{m}\mu(S^{m}) $ is a computable real number. 
\end{defn}
A bitstring passes a Schnorr test if it does not fail it (using the same notion of failing as in the Solovay test). We mimic this notion in the quantum setting.
\begin{defn}
\label{defn:Schnorr}
A \emph{quantum Schnorr test} is a strong Solovay test, $(S^{m})_{m}$ such that $\sum_{m}\tau(S^{m}) $ is a computable real number. A state is quantum Schnorr random if it passes all Schnorr tests.
\end{defn}
The following two definitions are due to Nies and Scholz\cite{unpublished}. The first is a quantum analogue of an interval Solovay test.
\begin{defn}
A \emph{strong Solovay test} is a computable sequence of special projections $(S^{m})_{m}$ such that $\sum_{m}\tau(S^{m}) < \infty$. A state $\rho$ fails $(S^{m})_{m}$ at $\epsilon$ if for infinitely many $m$, $\rho(S^{m})>\epsilon$.
\end{defn}
\begin{defn}
A state $\rho$ is \emph{weak Solovay random} if it passes all strong quantum Solovay tests.\end{defn}

\subsection{A general result about density matrices}
We prove a purely linear algebraic theorem about approximating density matrices by subspaces and then use it to show the equivalence of quantum Solovay and quantum Martin-L{\"
o}f randomness in the next subsection.

In words, the theorem says the following. Let $\mathcal{F}$ be a set of subspaces of `small' (at most $d$) total dimension and let $Q$ be the set of those density matrices `$\delta$ close' to at least $m$ many elements of $\mathcal{F}$. Then, there is a subspace of small (at most $4d/\delta m$) dimension `$\delta/4$ close' to every density matrix in $Q$. This subspace is spanned by a maximal (non-extendable) orthonormal subset of the set of vectors which are `close to $\mathcal{F}$' in a certain sense.

\begin{thm}
\label{thm:LinAStr}
Let $m,d,n \in \mathbb{N}$ and  $ \delta\in (0,1) $ be arbitrary. Let $\mathcal{F}=(T_k)_{k}$ be a set of subspaces of $\mathbb{C}^{n}$ with $\sum_{k}\text{dim}(T_{k}) \leq d$, let $M_k$ be the orthonormal projection onto $T_k$ and let $Q= \left\{\rho  : \rho \text{ is a density matrix on }\mathbb{C}^{n} \text{with Tr} (\rho M_k) >\delta  \text { for at least } m \text{ many } k\right\}.$ Let $V=\sum_k M_k$. If $C$ is a maximal (non-extendable) set of orthonormal vectors $u$ with the property that $\big<u|Vu\big> > m\delta/4$ and if $M$ is the orthonormal projection onto span$(C)$, then
we have that \[\text{Tr}(M) < \dfrac{4d}{\delta m}   \text{ and Tr}(M \rho) > \dfrac{\delta}{4}  \text{ for every } \rho  \in Q.\]
 
\end{thm}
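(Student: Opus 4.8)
The plan is to establish the two conclusions separately. The rank bound $\mathrm{Tr}(M) < 4d/(\delta m)$ is a short trace computation, whereas $\mathrm{Tr}(M\rho) > \delta/4$ carries the real content and is where I expect the difficulty.

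For the rank bound: since $C$ is orthonormal, $\mathrm{Tr}(M) = |C|$. Extending $C$ to an orthonormal basis of $\mathbb{C}^{n}$ and using $V = \sum_{k} M_{k} \geq 0$ gives $\sum_{u \in C}\langle u|Vu\rangle \leq \mathrm{Tr}(V) = \sum_{k}\dim(T_{k}) \leq d$. Since every $u \in C$ satisfies $\langle u|Vu\rangle > m\delta/4$, this forces $|C|\cdot m\delta/4 < d$, which is the assertion (and if $C = \emptyset$ there is nothing to prove).

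For the lower bound on $\mathrm{Tr}(M\rho)$, I would first extract from the non-extendability of $C$ the operator inequality $(I-M)V(I-M) \leq \tfrac{m\delta}{4}(I-M)$: if some $v$ had $\langle (I-M)v\,|\,V(I-M)v\rangle > \tfrac{m\delta}{4}\|(I-M)v\|^{2}$, then the normalization of $(I-M)v$ would be a unit vector in $\mathrm{span}(C)^{\perp}$ satisfying the defining property, contradicting maximality. Then, fixing $\rho \in Q$, I would, crucially, select a set $K_{0}$ of \emph{exactly} $m$ indices with $\mathrm{Tr}(\rho M_{k}) > \delta$ for $k \in K_{0}$, and set $W_{0} = \sum_{k \in K_{0}} M_{k}$, so that $0 \leq W_{0} \leq V$ and $\mathrm{Tr}(\rho W_{0}) > m\delta$. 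Writing $t = \mathrm{Tr}(M\rho)$ and decomposing $\rho$ along $M$ and $I-M$, I would bound $\mathrm{Tr}(\rho W_{0})$ by three contributions: the $M$--$M$ term equals $\sum_{k \in K_{0}}\mathrm{Tr}(M_{k}\,M\rho M) \leq m\,t$ since $M_{k} \leq I$ and $M\rho M \geq 0$; the $(I-M)$--$(I-M)$ term equals $\mathrm{Tr}((I-M)W_{0}(I-M)\rho) \leq \tfrac{m\delta}{4}$ by the operator inequality together with $W_{0}\leq V$; and the off-diagonal term is at most $2\sqrt{\mathrm{Tr}(M W_{0} M\rho)}\,\sqrt{\mathrm{Tr}((I-M)W_{0}(I-M)\rho)} \leq 2\sqrt{mt}\,\sqrt{m\delta/4} = m\sqrt{\delta t}$ by Cauchy--Schwarz in the Hilbert--Schmidt inner product.

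Adding the three bounds gives $m\delta < mt + \tfrac{m\delta}{4} + m\sqrt{\delta t}$; dividing by $m$ yields $\tfrac{3\delta}{4} < t + \sqrt{\delta t}$, and since $s \mapsto s + \sqrt{\delta s}$ is increasing and equals $\tfrac{3\delta}{4}$ at $s = \delta/4$, we conclude $t > \delta/4$. The main obstacle, I expect, is recognizing that one must pass to \emph{exactly} $m$ of the ``good'' projections: working with $V$ itself (or with all good indices) bounds the $M$--$M$ term only by $d\,t$, since at most $d$ of the $M_{k}$ are nonzero, giving the much weaker $\mathrm{Tr}(M\rho) > m\delta/(4d)$; it is the cancellation of the factor $m$ that delivers the stated constant. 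The other delicate point is tuning the Cauchy--Schwarz estimate on the cross term so that the three contributions assemble into a clean quadratic inequality in $\sqrt{t}$; this is exactly what the threshold $m\delta/4$ in the definition of $C$ is calibrated for.
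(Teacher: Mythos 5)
Your proof is correct, and the skeleton coincides with the paper's: the trace bound is the same computation, and for the main estimate you, like the paper, pass to a sum $W_0$ of \emph{exactly} $m$ of the good projections (the paper's $W=\sum_{k\in H}M_k$ with $|H|=m$, used there via $\|W\|\le m$) and exploit maximality of $C$ to control $V$ on $\mathrm{span}(C)^{\perp}$. Where you genuinely differ is the execution of the final estimate. The paper works in an eigenbasis $(e_i)$ of $\rho$, splits $e_i = Me_i + (I-M)e_i$, and eliminates the cross terms wholesale via $\langle u+v|W(u+v)\rangle \le 2\langle u|Wu\rangle + 2\langle v|Wv\rangle$ (a consequence of $\langle u-v|W(u-v)\rangle \ge 0$), paying a factor of $2$ on each diagonal term and landing on $m\delta < 2m\,\mathrm{Tr}(M\rho) + m\delta/2$. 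You instead decompose at the operator level, keep the cross term, and bound it by Cauchy--Schwarz in the Hilbert--Schmidt inner product (writing $\rho=\rho^{1/2}\rho^{1/2}$, $W_0=W_0^{1/2}W_0^{1/2}$), arriving at the quadratic inequality $3\delta/4 < t + \sqrt{\delta t}$ with $t=\mathrm{Tr}(M\rho)$. Amusingly, both routes are tight at exactly $t=\delta/4$, so neither buys a better constant. The paper's version is slightly more elementary (no operator square roots, no quadratic to solve); yours makes the mechanism more transparent, since the operator inequality $(I-M)V(I-M)\le \tfrac{m\delta}{4}(I-M)$ is a cleaner packaging of maximality than the pointwise bound $\langle v_i|Vv_i\rangle\le m\delta/4$ on the components of the eigenvectors. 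Your closing remark about why one must take exactly $m$ projections rather than all of $V$ is exactly the point of the paper's hypothesis $\|W\|\le m$ in its Lemma 2.12.
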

\begin{proof}
We need the following notions; for an operator $A$, $||A||:=$max$_{u\neq 0}\big<u|Au\big>/\big<u|u\big>$. That $||A||$ equals the maximum eigenvalue of $A$ and that $||A+B||\leq ||A||+||B||$ for operators $A$ and $B$ are standard results. The proof of Theorem \ref{thm:LinAStr} relies on the following lemma.
\begin{lem}
\label{review}
Let $V$ be a positive Hermitian operator on a $n$-dimensional Hilbert space and let $m,\delta >0$. Let $C$ be a maximal (non-extendable) set of orthonormal vectors $u$ with the property that $\big<u|Vu\big> > m\delta/4$. Let $M$ be the orthonormal projection onto span$(C)$. Then, Tr$(M)< 4$Tr$(V)/m\delta$ and for any density matrix $\rho$, if there is a positive Hermitian operator $W \leq V$ with $||W||\leq m$ and Tr$(W\rho)>m\delta$, then Tr$(M\rho)>\delta/4$.
\end{lem}
\begin{proof}
Let $\lambda = m\delta/4$ and let $C=\{w_1,\cdots , w_k \}$ be as in the statement of the lemma. So, Tr$(M)=k$ and we see that Tr$(V) \geq \sum_{i\leq k} \big<w_i | V w_i \big> > k\lambda =$ Tr$(M)\lambda$. I.e., Tr$(M)< 4$Tr$(V)/m\delta $. This gives the desired upper bound on Tr$(M)$. Now, we show the other part of the lemma.

Let $\rho$ and $W$ be as in the statement of the lemma. Let $e_1,..,e_n$ be the orthonormal eigenbasis of $\rho$ with corresponding eigenvalues, $\alpha_1,..,\alpha_n$ which satisfy $\sum_i \alpha_i = 1$.
\begin{align}
\label{eq:str1}
    m\delta < \text{Tr}(W\rho)=\sum_i \big<e_i |W\rho e_i\big>=\sum_i \alpha_i \big<e_i |W e_i\big>.
\end{align}
Similarly,
\begin{align}
\label{eq:str2}
    \text{Tr}(M\rho)=\sum_i \alpha_i \big<e_i |M e_i\big>=\sum_i \alpha_i \big<M e_i |M e_i\big>.
\end{align}
Letting $Me_i = u_i = u, e_i - u_i = v_i=v$,
\[\big<u+v|W|u+v\big>= \big<u|Wu\big>+\big<v|W v\big>+ \big<u|W v\big>+\big<v|Wu\big>.\]
Also, $\big<u|W v\big>+\big<v|Wu\big> \leq \big<u|Wu\big>+\big<v|W v\big>,$
since
\[0\leq \big<u-v|W|u-v\big>= \big<u|Wu\big>+\big<v|W v\big>- \big<u|W v\big>-\big<v|Wu\big>.\]
So, $\big<u+v|W|u+v\big> \leq  2 \big<u|Wu\big>+2\big<v|W v\big>.$ Putting this into\ref{eq:str1} yields;
\begin{align}
\label{eq:str3}
    m\delta< 2\sum_i \alpha_i \big<u_i | W u_i\big>+2\sum_i \alpha_i \big<v_i | W v_i\big>.
\end{align}
Recall that $C$ is a maximal set of orthonormal vectors $u$ with the property that $\big<u|Vu\big> > \lambda$ and that each $v_i$ is orthogonal to every vector in $C$. So, $\big<v_i|V v_i\big> \leq \lambda$ for all $i$. Noting that $\big<v_i | W v_i\big> \leq \big<v_i | V v_i\big> $, we see that the second term in\ref{eq:str3} is $\leq 2 \lambda$. As $||W|| \leq m$, we have that $\big<u_i|Wu_i\big>\leq m\big<u_i|u_i\big>$ for all $i$. Together with\ref{eq:str2} this implies that the first term in\ref{eq:str3} is $\leq 2m\sum_i \alpha_i \big<u_i|u_i\big>=2m\sum_i \alpha_i \big<Me_i|Me_i\big>=2m$Tr$(M\rho)$. So,\ref{eq:str3} becomes: $m\delta< 2m$Tr$(M\rho)+ m\delta/2$. I.e., $\delta/4<$Tr$(M\rho)$ as desired.
\end{proof}
We now prove the theorem. Note that $V=\sum_k M_k$ is positive and Hermitian since each $M_k$ is. Also, Tr$(V)=\sum_k$Tr$(M_k)=\sum_k$dim$(T_k)\leq d$. Let $C$ and $M$ be as in the statement of the theorem. By Lemma \ref{review}, Tr$(M) < 4d/m\delta$, proving the bound. Now, fix some $\rho \in Q$ and let $H$ be a set of $m$ many indices, $k$ such that Tr$(\rho M_k)> \delta$. Let $W:=\sum_{k\in H} M_k$. Then, $W\leq V$ and $||W||\leq \sum_{k\in H} ||M_k||= m$ as $M_k$ is an orthonormal projection and so $||M_k||=1$. Also, Tr$(W\rho)=\sum_{k\in H}$Tr$(M_k\rho)> m\delta$. So, Tr$(M\rho)>\delta/4$ by Lemma \ref{review}. 
\end{proof}

\subsection{Quantum Solovay randomness is equivalent to quantum Martin-L{\"o}f randomness }
\begin{thm}
\label{thm:000}
A state is quantum Solovay random if and only if it is quantum Martin-L{\"o}f random. 
\end{thm}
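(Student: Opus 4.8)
The plan is to prove the two implications separately. The direction ``quantum Solovay random $\Rightarrow$ q-MLR'' is immediate from the definitions and needs no linear algebra: if $(S^m)_m$ is any q-MLT then $\sum_m\tau(S^m)\le\sum_m 2^{-m}<\infty$, so $(S^m)_m$ is in particular a quantum Solovay test, and if $\rho$ fails this q-MLT at some order $\delta>0$ (i.e.\ $\rho(S^m)>\delta$ for all $m$) then it fails it as a Solovay test at level $\delta$. So a quantum Solovay random state passes every q-MLT and is q-MLR. The content is in the converse, which I would prove in contrapositive form: if $\rho$ fails some quantum Solovay test $(S^k)_k$ at a (rational, without loss of generality) level $\delta\in(0,1)$, so that $\rho(S^k)>\delta$ for infinitely many $k$, then I will build a q-MLT that $\rho$ fails.

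The classical recipe --- let the $m$-th test set be a tail union $\bigcup_{k\ge N_m}S^k$ --- is unavailable here, because a q-$\Sigma^0_1$ set is an actual computable sequence of special projections, and the span of infinitely many subspaces of $\mathbb{C}^{2^n}$ is not computable from the data (its rank can jump at arbitrarily late stages). Theorem \ref{thm:LinAStr} is precisely the replacement: it produces, from a family of low total dimension, a computable low-rank ``proxy'' that still catches every density matrix lying close to enough members of the family. So for each level $m$ I would fix a count $\ell=\ell(m)$ (large, to be calibrated below), set $\lambda_m=\ell(m)\delta/4$, and at each length $n$ form the finite sum $V^{(m)}_n:=\sum_{k<n}S^k_n$ of special projections. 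Coherence of each $S^k$ gives $S^k_{n+1}\ge S^k_n\otimes I$, hence $V^{(m)}_{n+1}\ge V^{(m)}_n\otimes I$. Applying Theorem \ref{thm:LinAStr} at level $n$ to the family $(S^k_n)_{k<n}$ (instantiating the theorem's $m,d,\delta,n$ as $\ell(m),\,d(m,n),\,\delta,\,2^n$, where $d(m,n)=\sum_{k<n}\mathrm{rank}(S^k_n)$), I take $C_n$ to be a maximal orthonormal set of $u$ with $\langle u\,|\,V^{(m)}_n u\rangle>\lambda_m$ and let $U^m_n$ be the projection onto $\mathrm{span}(C_n)$; the theorem yields $\mathrm{Tr}(U^m_n)<4d(m,n)/(\delta\ell(m))$, hence $2^{-n}\mathrm{Tr}(U^m_n)<4R/(\delta\ell(m))$ where $R=\sum_k\tau(S^k)<\infty$, and also $\mathrm{Tr}(U^m_n\sigma)>\delta/4$ for every $\sigma$ in the corresponding set $Q$.

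The point that needs real care --- and the main obstacle --- is that $(U^m_n)_n$ must itself be a q-$\Sigma^0_1$ set, i.e.\ $\mathrm{range}(U^m_n\otimes I)\subseteq\mathrm{range}(U^m_{n+1})$, and naively forcing $\mathrm{range}(U^m_n)\otimes\mathbb{C}^2$ into the next level would wreck the rank bound by accumulation. The resolution is to choose the $C_n$ coherently: since $V^{(m)}_{n+1}\ge V^{(m)}_n\otimes I$, for every $u\in C_n$ and $b\in\{|0\rangle,|1\rangle\}$ we have $\langle u\otimes b\,|\,V^{(m)}_{n+1}\,|\,u\otimes b\rangle\ge\langle u\,|\,V^{(m)}_n u\rangle>\lambda_m$, so $\{u\otimes b:u\in C_n,\ b\in\{|0\rangle,|1\rangle\}\}$ is an orthonormal set of eligible vectors at length $n+1$; I define $C_{n+1}$ by extending this set to a maximal one of eligible vectors (doable computably: greedily append top eigenvectors of $V^{(m)}_{n+1}$ restricted to the current orthogonal complement, a decidable step for matrices with algebraic entries). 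Then $\mathrm{span}(C_{n+1})\supseteq\mathrm{span}(C_n)\otimes\mathbb{C}^2$, giving coherence, while the rank is still bounded directly by the theorem through $|C_{n+1}|\,\lambda_m<\mathrm{Tr}(V^{(m)}_{n+1})$ --- keeping $\lambda_m$ fixed across $n$ is exactly what prevents accumulation. To see $\rho$ fails $(U^m)_m$: pick $\ell(m)$ indices $k_1<\dots<k_{\ell(m)}$ with $\rho(S^{k_i})>\delta$; since $\mathrm{Tr}(\rho_n S^k_n)\nearrow\rho(S^k)$, for all large $n$ each $k_i$ is $<n$ with $\mathrm{Tr}(\rho_n S^{k_i}_n)>\delta$, so $\rho_n\in Q$ and $\mathrm{Tr}(U^m_n\rho_n)>\delta/4$; monotonicity of $\mathrm{Tr}(U^m_n\rho_n)$ in $n$ then gives $\rho(U^m)\ge\delta/4$, so $\inf_m\rho(U^m)\ge\delta/4>0$. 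The one remaining wrinkle is that $R$ need not be computable, so one cannot literally compute a legal $\ell(m)$ with $4R/(\delta\ell(m))\le 2^{-m}$; this is handled exactly as in the classical normalization of Solovay tests (pass to an equivalent test with $\sum_k\tau(S^k)\le 1$, or re-index the produced sequence by a constant shift), which I would isolate in a short lemma rather than let interrupt the argument.
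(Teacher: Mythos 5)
Your proposal is correct and follows essentially the same route as the paper: the easy direction is dispatched from the definitions, and the converse applies Theorem \ref{thm:LinAStr} levelwise to $V_n=\sum_{k\le n}S^k_n$ with threshold $\ell(m)\delta/4$ (the paper takes $\ell(m)=2^m$), building each maximal orthonormal set coherently on top of the tensored-up previous one exactly as in the paper's construction of $C^m_n$ from $D^m_n$ and Remark \ref{S}. The only differences are cosmetic --- your explicit treatment of the normalization $\sum_k\tau(S^k)\le 1$ and the constant re-indexing is a point the paper handles by the same (tacit) convention.
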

\begin{proof}
It suffices to show that if a state $\rho$ is not quantum Solovay random then it is not quantum Martin-L{\"o}f random.
To this end, let $\rho=(\rho_n)_{n\in \omega}$ be a state which fails a quantum Solovay test, $(S^{k})_{k\in \omega}$ at level $\delta$. We show that $\rho$ is not quantum Martin-L{\"o}f random by building a quantum Martin-L{\"o}f test, $(G^{m})_{m\in\omega}$, with $G^{m}= (G^{m}_n)_{n\in\omega}$,  which $\rho$ fails at level $ \delta/4$. Without loss of generality, assume that $S^{k}_n =\emptyset$ for $k>n$ and let $\sum_{k} \tau(S^k )<1$. 
 We use the notation:
\[A^{m}_{t}= \left\{\psi \in \mathbb{C}^{2^t}: ||\psi||=1, \sum_{k\leq t} \text{Tr}(|\psi \big> \big< \psi| S^{k}_{t})> \dfrac{2^{m}\delta}{4}\right\}.\]
\emph{Construction of $G^{m}$}: 
We build $G^m$ inductively as follows. Given $C^{m}_{n-1} \subseteq \mathbb{C}^{2^{n-1}}_{alg}$, a maximal (non-extendable) orthonormal subset of $A^{m}_{n-1}$, let \[D^{m}_{n}=\left\{|\psi \big>\otimes |i\big> \in \mathbb{C}^{2^n}_{alg}: i \in \{1,0\},     \psi \in C^{m}_{n-1}\right\}.\]
Note that $D^{m}_n \subseteq A^{m}_n$ since $C^{m}_{n-1} \subseteq A^{m}_{n-1}$. 
Compute $S$, a maximal orthonormal subset with the property that $D^{m}_n \subseteq S \subseteq A^{m}_n $ and define $C^{m}_n := S$. $C^{m}_n $ is composed of complex algebraic vectors by construction since $D^{m}_n \subseteq C^{2^n}_{alg}$ and as we may assume that the operations required to compute $S$ from $D^{m}_n$ are algebraic (See Remark \ref{S}). Let $G^{m}_{n}$ be the projection: \[G^{m}_{n}= \sum_{\psi \in C^{m}_{n}}|\psi\big> \big< \psi|.\] \emph{End of construction}.

\begin{lem}
$(G^{m})_{m\in\omega}$ is a quantum Martin-L{\"o}f test.
\end{lem}

\begin{proof}
Fix $m$. Clearly,  $(C^{m}_{n})_{n\in\omega}$ is a uniformly computable sequence. By construction, $\text{range}(G^{m}_{n-1} \otimes I_{2}) \subseteq \text{range}(G^{m}_n)$.
So, $G^{m}=(G^{m}_{n})_{n\in\omega}$ is a  quantum-$\Sigma^{0}_1$ set for each $m$.
The sequence $(G^{m})_{m\in\omega}$ is uniformly computable in $m$ by construction. Since, $ 1 \geq \sum_{k} \tau(S^{k})$, we have that $ 2^{n} \geq \sum_{k} \text{Tr}(S^{k}_{n})$ for all $n$.  Make the replacements: $d \mapsto 2^n , n\mapsto 2^n, m\mapsto 2^m , M_k \mapsto S^k_n$ (and hence $V\mapsto \sum_{k\leq n} S^k_n$) and $C \mapsto C^m_n$ (and hence $M \mapsto G^m_n$) in Theorem \ref{thm:LinAStr} to see that Tr$(G^{m}_{n})  < (4/\delta)  2^{n-m}$ for all $m,n$. So, $\tau(G^{m})  < (4/\delta)  2^{-m}$ for all $m$.
\end{proof}
\begin{lem}
\label{lem:useweakcor}
$\rho$ fails $(G^m)_m$ at level $\dfrac{\delta}{4}$.
\end{lem}
\begin{proof}
We must show that $\text{inf}_{m\in\omega} \rho(G^m) > \dfrac{\delta}{4}.$ It suffices to show that for all $m\in\omega$, there is an $n$ such that 
$\text{Tr}(\rho_n G^{m}_n)> \dfrac{\delta}{4}.$ To this end, let $m$ be arbitrary and fix a $n$ big enough so that there exist $2^{m}$ many $ks$ such that Tr$(\rho_n S^{k}_n) > \delta$. Make the replacements in Theorem \ref{thm:LinAStr}: $d \mapsto 2^n , n\mapsto 2^n, m\mapsto 2^m , M_k \mapsto S^k_n$ (and hence $V\mapsto \sum_{k\leq n} S^k_n$) and $C \mapsto C^m_n$ (and hence $M \mapsto G^m_n$). As $C^m_n$ is a maximal orthonormal set in $A^m_n$, Theorem \ref{thm:LinAStr} gives that Tr$(\rho_n G^m_n)>\delta/4$.
\end{proof}
Theorem \ref{thm:000} is proved.\qedhere
\end{proof}
\begin{remark}
\label{S}
 We explain a construction method which ensures that $S$ is complex algebraic. $S$ is computed by initially setting $X:=D^m_{n}$ and then by progressively extending $X$ in stages, maintaining  orthonormality and that $X\subseteq A^m_{n}$, until it is no longer possible to do so, at which stage we let $S:=X$. The details are as follows. Initialize $X$ to $D^m_{n}$. At each stage, we do the following: Let $P$ be the orthonormal projection onto span$(X)^{\perp}$. Note that if $\theta$ is the maximum eigenvalue of $P \big(\sum_{k\leq n}S^k_n\big ) P$, and if $w$ is the unit length eigenvector with eigenvalue $\theta$, then $w\in$ span$(X)^{\perp}$,
 \[\max\limits_{\psi \in \text{span}(X)^{\perp}, ||\psi||=1}\big<\psi|\sum_{k\leq n}S^k_n|\psi\big>=\max\limits_{\psi \in \text{span}(X)^{\perp}, ||\psi||=1}\big<\psi|P\big( \sum_{k\leq n}S^k_n \big) P|\psi\big>= \theta,\]
 and
 \[\theta=\big<w|P \big(\sum_{k\leq n}S^k_n\big ) P|w\big>=\big<w| \sum_{k\leq n}S^k_n |w\big>.\]

 So we extend $X$, or check that it is the desired $S$, as follows: Compute $\theta$, the maximum eigenvalue of $P \big(\sum_{k\leq n}S^k_n\big ) P$, and $w$, the unit length eigenvector corresponding to $\theta$. If $\theta >2^m\delta/4$, then we add $w$ to $X$ hence extending it and go to the next stage. If $\theta \leq 2^m\delta/4$, then $X$ is maximal in $A^m_{n}$ and so we set $S=X$. It is easy to see that the operations of finding $w$ and $\theta$ at each stage are algebraic and hence that $w$ is complex algebraic.
\end{remark}

\subsection{Convexity}
We show that all classes of random states are convex. The first result in this section is a corollary of the main theorem from the previous section.
\begin{cor}

A convex combination of q-Martin-L{\"o}f random states is q-Martin-L{\"o}f random. Formally, if $(\rho^i )_{i<k<\omega}$ are q-ML random states and $\sum_{i<k<\omega} \alpha_{i} =1$, then $\rho=\sum_{i<k}\alpha_{i}\rho^{i}$ is q-ML random.
\end{cor}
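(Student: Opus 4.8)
The plan is to reduce everything to quantum Solovay randomness via Theorem \ref{thm:000}, because passing a quantum Solovay test is a genuine limit condition, which is far better behaved under finite sums than the infimum condition defining q-MLR. Before that I would record the (routine) fact that $\rho=(\rho_n)_{n\in\omega}$ with $\rho_n=\sum_{i<k}\alpha_i\rho^i_n$ is actually a state: each $\rho_n$, being a convex combination of density matrices, is again a density matrix, and since the partial trace is linear, $PT_{\mathbb{C}^{2}}(\rho_n)=\sum_{i<k}\alpha_i PT_{\mathbb{C}^{2}}(\rho^i_n)=\sum_{i<k}\alpha_i\rho^i_{n-1}=\rho_{n-1}$, so the coherence condition holds.

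The key observation is that $\rho\mapsto\rho(G)$ is linear in $\rho$. Fix an arbitrary quantum Solovay test, which I will write $(S^j)_{j\in\omega}$ with $S^j=((S^j)_n)_n$. By linearity of the trace, $\mathrm{Tr}(\rho_n(S^j)_n)=\sum_{i<k}\alpha_i\,\mathrm{Tr}(\rho^i_n(S^j)_n)$ for every $n$; this is a \emph{finite} sum of sequences each of which converges as $n\to\infty$ (each $\rho^i(S^j)$ being well defined), so letting $n\to\infty$ gives $\rho(S^j)=\sum_{i<k}\alpha_i\rho^i(S^j)$. Now each $\rho^i$ is q-ML random, hence quantum Solovay random by Theorem \ref{thm:000}, so $\lim_j\rho^i(S^j)=0$ for each of the finitely many $i<k$. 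Taking $j\to\infty$ in the finite-sum identity above yields $\lim_j\rho(S^j)=\sum_{i<k}\alpha_i\lim_j\rho^i(S^j)=0$, so $\rho$ passes $(S^j)_j$. Since the test was arbitrary, $\rho$ is quantum Solovay random, and therefore q-ML random by Theorem \ref{thm:000}.

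The point worth flagging — the reason one cannot simply argue with q-MLTs directly — is that q-MLR of each $\rho^i$ only delivers $\inf_m\rho^i(S^m)=0$ for a q-MLT $(S^m)_m$, and these infima may be witnessed at different indices $m$ for different $i$, so one cannot conclude $\inf_m\sum_{i<k}\alpha_i\rho^i(S^m)=0$. Passing through the Solovay characterization, where the relevant condition is the honest limit $\lim_m\rho^i(S^m)=0$, removes this difficulty because limits commute with finite sums; this is precisely why the statement is naturally phrased as a corollary of the equivalence proved in the previous subsection. Note that the argument uses only the finiteness of $k$ and nothing about the $\rho^i$ beyond their randomness, so essentially no additional obstacle arises.
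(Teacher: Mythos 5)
Your proof is correct and follows essentially the same route as the paper's: both reduce the statement to Theorem \ref{thm:000} by exploiting linearity of the trace over the finite convex combination, the paper arguing contrapositively (a failed q-MLT, viewed as a quantum Solovay test, must be failed by some $\rho^i$ by pigeonhole) where you argue directly (limits commute with finite sums, so $\rho$ passes every quantum Solovay test). The only cosmetic difference is that you invoke both directions of the equivalence while the paper needs only one; the mathematical content is identical.
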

\begin{proof}
Suppose for a contradiction that there is a q-Martin-L{\"o}f test $(G^{m})_{m\in \omega}$  and a $\delta>0$ such that $\forall m\in \omega$, $ \rho(G^m)>\delta$.  So, $\forall m \in \omega$, $\exists n$ such that Tr$(\rho_{n}G^{m}_{n})>\delta $ where $\rho_n =\sum_{i<k}\alpha_{i}\rho^{i}_n$. So, $\forall m \in \omega$, $\exists n$ such that \[\delta<\text{Tr}\bigg(\sum_{i<k}\alpha_{i}\rho^{i}_{n}G^{m}_{n}\bigg)= \sum_{i<k}\alpha_{i}\text{Tr}(\rho^{i}_{n}G^{m}_{n}).\]
By convexity of the sum, there is an $i$ such that Tr$(G^{m}_{n}\rho^{i}_{n})>\delta$. In summary, \[\forall m, \text{ there is an } i  \text{ and an }n  \text{ such that Tr} (\rho^{i}_{n}G^{m}_{n})>\delta.\]
Since there are only finitely many $i$s, by the pigeonhole principle, there is an $i$ such that $\exists^{\infty} m$ with Tr $ (\rho^{i}_{n}G^{m}_{n})>\delta$, for some $n$. So, $\exists^{\infty} m$ with $ \rho^{i} (G^{m})>\delta$. So, $\rho^i$ fails the q-Solovay test $(G^{m})_{m\in \omega}$ and hence is not q-Martin-L{\"o}f random by our previous result. This is a contradiction. 
\end{proof}
\begin{thm}

A convex combination of quantum Schnorr random states is quantum Schnorr random. Formally, if $(\rho^i )_{i<k<\omega}$ are quantum Schnorr random states and $\sum_{i<k<\omega} \alpha_{i} =1$, then $\rho=\sum_{i<k}\alpha_{i}\rho^{i}$ is quantum Schnorr random.
\end{thm}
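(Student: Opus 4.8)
The plan is to follow the same template as the preceding corollary on q-Martin-L\"of randomness, pushing a convex combination through the trace and then applying the pigeonhole principle over the finitely many summands. The argument is in fact slightly simpler, since ``failing a quantum Schnorr test'' is already phrased in the ``infinitely often'' form, so there is no need to reinterpret one kind of test as another (as the q-MLR corollary did via Theorem \ref{thm:000}).

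First I would argue by contradiction. Suppose $\rho=\sum_{i<k}\alpha_i\rho^i$ is not quantum Schnorr random. Then there is a quantum Schnorr test $(S^m)_m$ and a $\delta>0$ such that $\rho$ fails $(S^m)_m$ at level $\delta$; that is, $\rho(S^m)>\delta$ for infinitely many $m$. Here each $S^m$ is a special projection acting on $\mathbb{C}^{2^{n_m}}$ for an appropriate $n_m$, and $\rho(S^m)=\text{Tr}(\rho_{n_m}S^m)$.

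Next comes the key step. Since $\rho=\sum_{i<k}\alpha_i\rho^i$ as states, we have $\rho_n=\sum_{i<k}\alpha_i\rho^i_n$ for every $n$, so by linearity of the trace
\[\rho(S^m)=\text{Tr}\Big(\sum_{i<k}\alpha_i\rho^i_{n_m}S^m\Big)=\sum_{i<k}\alpha_i\,\rho^i(S^m)\]
for every $m$. Fix any $m$ with $\rho(S^m)>\delta$. Because the $\alpha_i$ are nonnegative and sum to $1$, not all of the numbers $\rho^i(S^m)$ can be $\le\delta$ (else the weighted average $\rho(S^m)$ would be $\le\delta$ too), so there is an index $i(m)<k$ with $\rho^{i(m)}(S^m)>\delta$. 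Since this holds for infinitely many $m$ while there are only finitely many indices below $k$, the pigeonhole principle yields a single $i^*<k$ for which $\rho^{i^*}(S^m)>\delta$ for infinitely many $m$. This says precisely that $\rho^{i^*}$ fails the quantum Schnorr test $(S^m)_m$ at level $\delta$. Crucially, $(S^m)_m$ is the very same test --- in particular still a strong Solovay test whose weight $\sum_m\tau(S^m)$ is a computable real --- so it is a bona fide quantum Schnorr test, and hence $\rho^{i^*}$ is not quantum Schnorr random, contradicting the hypothesis.

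I do not expect a serious obstacle here; the only points needing care are (i) confirming that $\rho(S^m)$ for a single special projection is interpreted as $\text{Tr}(\rho_{n_m}S^m)$ and hence behaves linearly in $\rho$, and (ii) noting that finiteness of $k$ is genuinely used in the pigeonhole step, so the argument does not extend verbatim to infinite convex combinations. Everything else is routine averaging and bookkeeping.
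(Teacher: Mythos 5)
Your proposal is correct and follows essentially the same route as the paper's own proof: assume failure of a quantum Schnorr test at level $\delta$, use linearity of the trace to write $\rho(S^m)$ as the convex combination $\sum_{i<k}\alpha_i\rho^i(S^m)$, extract a witness index by averaging, and apply the pigeonhole principle over the finitely many summands to find a single $\rho^{i^*}$ failing the very same test. Your two cautionary remarks (linearity of $\rho\mapsto\text{Tr}(\rho_{n_m}S^m)$ and the essential use of finiteness of $k$) are both apt and consistent with the paper's argument.
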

\begin{proof}
Suppose for a contradiction that there is a quantum Schnorr test $(G^{m})_{m\in \omega}$  and a $\delta>0$ such that $\exists^{\infty} m\in \omega$, $ \rho(G^m)>\delta$. Letting $G^m$ be $n_{m}$ by $n_{m}$, $\exists^{\infty} m$, such that \[\delta<\text{Tr}(\rho_{n_{m}}G^{m}_{n_{m}})=\text{Tr}(\sum_{i<k}\alpha_{i}\rho^{i}_{n_{m}}G^{m}_{n_{m}})= \sum_{i<k}\alpha_{i}\text{Tr}(\rho^{i}_{n_{m}}G^{m}_{n_{m}}).\]
By convexity of the sum, there is an $i$ such that Tr$(G^{m}_{n_{m}}\rho^{i}_{n_{m}})>\delta$. In summary, \[\exists^{\infty} m, \text{ there is an } i   \text{ such that Tr} (\rho^{i}_{n_{m}}G^{m}_{n_{m}})>\delta.\]
Since there are only finitely many $i$ s, by the pigeonhole principle, there is an $i$ such that $\exists^{\infty} m$ with Tr $ (\rho^{i}_{n_{m}}G^{m}_{n_{m}})>\delta$. So, $\exists^{\infty} m$ with $ \rho^{i} (G^{m})>\delta$. So, $\rho^i$ fails the q-Schnorr test $(G^{m})_{m\in \omega}$ and hence is not q-Schnorr. This is a contradiction.
\end{proof}
Noting that the above proof needed only the Solovay type of failing criterion, we get:
\begin{thm}

A convex combination of weak Solovay random states is q-weak Solovay  random. Formally, if $(\rho^i )_{i<k<\omega}$ are weak Solovay random states and $\sum_{i<k<\omega} \alpha_{i} =1$, then $\rho=\sum_{i<k}\alpha_{i}\rho^{i}$ is weak Solovay  random.
\end{thm}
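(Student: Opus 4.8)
The plan is to run, essentially verbatim, the argument just given for quantum Schnorr randomness, exploiting the observation recorded in the remark preceding the statement: that proof used only the Solovay-style failure criterion ($\rho(G^m)>\delta$ for infinitely many $m$) together with linearity of the trace and the pigeonhole principle, and never invoked the computability of $\sum_m\tau(G^m)$ nor the interval structure of a Schnorr test. Since a weak Solovay test is exactly a strong Solovay test, i.e. a computable sequence of special projections $(S^m)_m$ with $\sum_m\tau(S^m)<\infty$, the same reasoning applies once we replace "Schnorr test" by "strong Solovay test" throughout.

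First I would record that $\rho=\sum_{i<k}\alpha_i\rho^i$ is again a state: each $\rho_n=\sum_{i<k}\alpha_i\rho^i_n$ is a convex combination of density matrices, hence a density matrix, and since the partial trace is linear, $PT_{\mathbb{C}^{2}}(\rho_n)=\sum_{i<k}\alpha_i\,PT_{\mathbb{C}^{2}}(\rho^i_n)=\sum_{i<k}\alpha_i\rho^i_{n-1}=\rho_{n-1}$. Thus the statement "$\rho$ is weak Solovay random" is meaningful.

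Next, suppose toward a contradiction that $\rho$ is not weak Solovay random. Then there is a strong Solovay test $(G^m)_m$ and a $\delta>0$ such that $\rho(G^m)>\delta$ for infinitely many $m$. Writing $G^m$ as an $n_m$ by $n_m$ special projection, linearity of the trace gives, for each such $m$,
\[\delta<\text{Tr}(\rho_{n_m}G^m)=\text{Tr}\Big(\sum_{i<k}\alpha_i\rho^i_{n_m}G^m\Big)=\sum_{i<k}\alpha_i\,\text{Tr}(\rho^i_{n_m}G^m),\]
so, as the $\alpha_i$ are nonnegative and sum to $1$, for each such $m$ there is an index $i(m)<k$ with $\text{Tr}(\rho^{i(m)}_{n_m}G^m)>\delta$. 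Since $i(m)$ ranges over the finite set $\{0,\dots,k-1\}$ while there are infinitely many such $m$, the pigeonhole principle yields one fixed $i<k$ for which $\rho^i(G^m)=\text{Tr}(\rho^i_{n_m}G^m)>\delta$ for infinitely many $m$. Hence $\rho^i$ fails the strong Solovay test $(G^m)_m$ at level $\delta$, contradicting that $\rho^i$ is weak Solovay random.

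I do not expect any genuine obstacle here: the only points needing care are bookkeeping ones — verifying the convex combination is again a state, noting that $\rho(G^m)$ decomposes as a finite convex combination of the $\rho^i(G^m)$ computed at the \emph{same} level $n_m$ (so that the pigeonhole index can be chosen uniformly over the infinitely many witnessing $m$), and checking that extracting the failed test for $\rho^i$ needs none of the computability or interval hypotheses that separate strong Solovay tests from Schnorr tests. That last point is precisely what the remark before the statement asserts, so the proof is a direct transcription of the Schnorr case.
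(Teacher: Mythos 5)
Your proposal is correct and matches the paper's intended argument: the paper explicitly states that this theorem follows by the same pigeonhole-plus-linearity argument as the quantum Schnorr case, using only the Solovay-style failure criterion, which is exactly what you carry out. Your additional verification that the convex combination is again a state is a harmless (and welcome) piece of bookkeeping the paper omits.
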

The proof is almost identical to the previous one.
\subsection{Nesting property of quantum Martin-L{\"o}f tests}
It is interesting to see which classical results carry over to the quantum realm. For example, the existence of a universal MLT, $(U_{n})_n$ such that a bitstring is MLR if and only if it passes this $(U_{n})_n$ does carry over \cite{unpublished}. The `nesting property' of the classical Martin-L{\"o}f test says that we can, without loss of generality assume the universal test $(U_{n})_n$ to be nested; i.e., to satisfy $U_{n+1} \supseteq U_n$ for all $n$. We extend this property to the quantum setting:
\begin{thm}
\label{thm:00}
There is a q-MLT, $(Q^{m})_{m\in \mathbb{N}}$ with the properties (1) If a state $\rho$ fails the universal q-Martin-L{\"o}f test $(G^{m})_{m\in \mathbb{N}}$ at $\delta>0$, then, it also fails $(Q^{m})_{m\in \mathbb{N}}$ at $\delta>0$ (2) If $Q^{m}=(Q^{m}_{n})_{n\in \mathbb{N}}$ for all m, then for all $m$ and $n$, range$(Q^{m+1}_{n}) \subseteq $ range $(Q^{m}_{n}) $. In particular, $Q^{m+1} \leq Q^m$ for all m.
\end{thm}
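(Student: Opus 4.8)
The plan is to translate the classical construction of a nested test into the quantum setting: given a universal MLT $(U_n)_n$ one sets $Q_n=\bigcup_{k\ge n}U_k$, which is nested, has measure at most $2^{-n+1}$, and contains $\bigcap_n U_n$; the quantum analogue replaces a ``union of $\Sigma^0_1$ classes'' by the ``join'' (span of the ranges) of special projections. First I would make a harmless normalization of the given universal q-MLT $(G^m)_m$, where $G^m=(G^m_n)_n$: replace $G^m_n$ by the zero projection whenever $n<m$. Since $\text{range}(0\otimes I)=\{0\}$, the modified sequence is still a q-$\Sigma^0_1$ set, uniformly computable in $m$; and because $\tau(G^m)$ and $\rho(G^m)$ are defined as limits as $n\to\infty$, neither value changes, so the modified family is still a q-MLT and $\rho$ fails it at $\delta$ iff $\rho$ fails the original at $\delta$. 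Hence I may assume $G^m_n=0$ for $n<m$.

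Then I define $Q^m_n$ to be the orthogonal projection onto the subspace $\sum_{k\ge m}\text{range}(G^k_n)$ spanned by the union of the ranges, which equals $\sum_{k=m}^{n}\text{range}(G^k_n)$ and is therefore a \emph{finite} join thanks to the normalization. I would check two things. First, a join of finitely many special projections is again a special projection, and is computed uniformly from codes for the $G^k_n$ by Gram--Schmidt on their algebraic column vectors, an operation that stays inside the complex algebraic numbers; this is exactly the kind of argument already spelled out in Remark \ref{S}. Second, $\text{range}(Q^m_n\otimes I)=\sum_{k\ge m}\text{range}(G^k_n\otimes I)\subseteq\sum_{k\ge m}\text{range}(G^k_{n+1})=\text{range}(Q^m_{n+1})$, using the q-$\Sigma^0_1$ property of each $G^k$ and the fact that tensoring with a fixed space commutes with joins. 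So each $Q^m=(Q^m_n)_n$ is a q-$\Sigma^0_1$ set, uniformly in $m$.

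For the measure bound, subadditivity of dimension gives $\text{Tr}(\tau_n Q^m_n)=2^{-n}\dim\big(\sum_{k\ge m}\text{range}(G^k_n)\big)\le\sum_{k\ge m}2^{-n}\text{rank}(G^k_n)=\sum_{k\ge m}\text{Tr}(\tau_n G^k_n)$ for every $n$. Each $\text{Tr}(\tau_n G^k_n)$ is nondecreasing in $n$ (the q-$\Sigma^0_1$ nesting forces $\text{rank}(G^k_{n+1})\ge 2\,\text{rank}(G^k_n)$), so monotone convergence lets me pass to the limit termwise, yielding $\tau(Q^m)\le\sum_{k\ge m}\tau(G^k)\le 2^{-m+1}$; re-indexing by one (replace $Q^m$ by $Q^{m+1}$) makes $\tau(Q^m)\le 2^{-m}$, so $(Q^m)_m$ is a q-MLT. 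Property (2) is then immediate: $\sum_{k\ge m+1}\text{range}(G^k_n)\subseteq\sum_{k\ge m}\text{range}(G^k_n)$, so $\text{range}(Q^{m+1}_n)\subseteq\text{range}(Q^m_n)$ for all $m,n$, which is the nesting, and in particular $Q^{m+1}_n\le Q^m_n$ in the Löwner order. For property (1): $\text{range}(G^m_n)\subseteq\text{range}(Q^m_n)$ gives $G^m_n\le Q^m_n$, hence $\text{Tr}(\rho_n G^m_n)\le\text{Tr}(\rho_n Q^m_n)$ and so $\rho(G^m)\le\rho(Q^m)$ (recall these are genuine limits since $\text{Tr}(\rho_n p_n)$ is nondecreasing in $n$); thus $\inf_m\rho(Q^m)\ge\inf_m\rho(G^m)>\delta$ whenever $\rho$ fails $(G^m)_m$ at $\delta$, and since $\rho(Q^m)$ is non-increasing in $m$ the re-indexing does not disturb this.

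The step I expect to cost the most care is the measure bound: making rigorous the interchange of $\lim_n$ with $\sum_{k\ge m}$, which rests on the monotonicity of $\text{Tr}(\tau_n G^k_n)$ in $n$ that has to be extracted from the definition of a q-$\Sigma^0_1$ set, together with the bookkeeping needed to be certain that the finite joins defining $Q^m_n$ really are uniformly computable special projections --- for which I would simply invoke the algebraicity argument already written out in Remark \ref{S}.
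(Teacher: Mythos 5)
Your proposal is correct and follows essentially the same route as the paper: defining $Q^m_n$ as the projection onto the join $\sum_{k=m}^{n}\text{range}(G^k_n)$, bounding its rank by subadditivity of dimension against $\sum_{k\ge m}2^{n-k}$, re-indexing to get the $2^{-m}$ bound, and reading off (1) and (2) from the range inclusions $\text{range}(G^m_n)\subseteq\text{range}(Q^m_n)$ and $\text{range}(Q^{m+1}_n)\subseteq\text{range}(Q^m_n)$. Your extra care about the computability of the finite joins and the interchange of limit and sum only makes explicit what the paper leaves implicit.
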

\begin{proof}
Informally speaking, we want to let $Q^m$ be $\sum_{i>m} G^{i}$. Precisely, we build $Q^m$ level by level. For any natural numbers $i \leq n$, let\[G^{i}_{n} = \sum_{j=1}^{2^{n-i}} |v^{i,n}_{j}\big>\big<v^{i,n}_{j}|.\] Let 
\[S^{m}_{n}:= \text{span} \bigcup_{i=m}^{n} \{v^{i,n}_{j}: 1\leq j \leq 2^{n-i}\},
\]
and let $Q^{m}_{n}$ be the special projection onto $S^{m}_{n}$. Let $Q^{m}= (Q^{m}_{n})_{n}$. Fix an $m$. We see that $Q^{m}_{n} \leq Q^{m}_{n+1}$, since $G^{i}_{n} \leq G^{i}_{n+1}$ holds for all $i,n$. So, $Q^m$ is a q-$\Sigma^{0}_1$ class. The dimension of $S^{m}_{n}$ is at most $\sum_{i=m}^{n} \text{Trace}(G^{i}_{n}) \leq \sum_{i=m}^{n} 2^{n-i} < 2^{n-m+1}.$ So, $(Q^{m})_{m=2}^{\infty}$ is a q-MLT. Let $m$ and $n$ be arbitrary and $n\geq m+1$. Then, clearly, by definition of $S^{m}_{n}$, we see that range$(Q^{m+1}_{n}) \subseteq $ range $(Q^{m}_{n}) $. So, the nesting property holds. Let $\rho= (\rho_{n})_{n}$ be a state. By the nesting, and by properties of projection operators, we have that for a fixed $m$ and all $n$,
\[\text{Tr}(\rho_{n}Q^{m+1}_{n}) \leq \text{Tr}(\rho_{n}Q^{m}_{n}) \leq \rho(Q^{m}).\]
So, $\rho(Q^{m+1})=$ sup$_{n}\text{Tr}(\rho_{n}Q^{m+1}_{n}) \leq \rho(Q^{m})$ for all $m$. (1) clearly holds.
\end{proof}

\section{Randomness for diagonal states}
\label{sec:measures}

A state $\rho=(\rho_{n})_n$ is defined to be \emph{diagonal} if $\rho_n$ is diagonal for all $n$. So, each $\rho_n$ in a diagonal state represents a  mixture of separable states. A diagonal $\rho=(\rho_{n})_n$ can be thought of as a measure on Cantor space, denoted by $\mu_{\rho}$: if $\sigma \in 2^n$, we define $\mu_{\rho}(\llbracket\sigma\rrbracket):= \big<\sigma|\rho_{n}|\sigma\big>$. We will write $\mu_{\rho}(\sigma)$ instead of $\mu_{\rho}(\llbracket\sigma\rrbracket)$. $\mu_{\rho}$ is easily seen to be a measure by noting that the partial trace over the last qubit of $\rho_{n+1}$ equals $\rho_n$ for all $n$. Recalling the notation in Remark \ref{rem:notation} and as $S$ is prefix free, we have, \[\mu_\rho (\llbracket S \rrbracket)= \sum_{\sigma \in S} \mu_\rho (\sigma) =\sum_{\sigma \in S} \big<\sigma|\rho_{n}|\sigma\big>=\text{Tr}(\rho_n P_S).\]
This will be used frequently. Nies and Stephan have recently defined a notion of randomness for measures on Cantor space called Martin-L{\"o}f absolute  continuity\cite{unpublished2}.
\begin{defn}
A measure $\pi$ on Cantor space is called \emph{ Martin-L{\"o}f absolutely continuous} if $\inf_{m}\pi(G_{m})=0$ for each classical MLT $(G_{m})_{m\in \mathbb{N}}$.
\end{defn}
This notion turns out to be equivalent to quantum Martin-L{\"o}f randomness in the sense that for a diagonal $\rho$, $\rho$ is q-MLR if and only if $\mu_\rho$ is  Martin-L{\"o}f absolutely continuous. It is easy to see that if a diagonal $\rho$ is q-MLR, then $\mu_\rho$ is Martin-L{\"o}f absolutely continuous. We now show the other direction.
\begin{thm}
\label{thm:30}
Let $\rho$ be diagonal. If it fails a q-MLT $(G^{m})_{m\in \mathbb{N}}$ at order $\delta$, then there is a classical MLT, $(C^{m})_{m\in \mathbb{N}}$ such that  $\inf_{m}\mu_{\rho}(C^{m})>\delta/2$.
\end{thm}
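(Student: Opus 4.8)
The plan is to ``round'' each special projection $G^m_n$ to the computational basis, turning the q-MLT into a classical MLT, with the rounding threshold calibrated so as to simultaneously keep the Lebesgue measure of the rounded sets small and keep their $\mu_\rho$-mass above $\delta/2$. The key use of diagonality will be that $\mathrm{Tr}(\rho_n G^m_n)=\sum_{\sigma\in 2^n}\mu_\rho(\sigma)\langle\sigma|G^m_n|\sigma\rangle$, which lets a lower bound on a trace against $\rho_n$ be converted into a lower bound on a classical measure of a set of strings.

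First I would fix a rational $q$ with $\delta<q<\inf_m\rho(G^m)$; this exists because $\rho$ fails $(G^m)_m$ at order $\delta$, and since $\rho(G^m)=\lim_n\mathrm{Tr}(\rho_n G^m_n)>q$ for every $m$, I may pick for each $m$ an index $n_m$ with $\mathrm{Tr}(\rho_{n_m}G^m_{n_m})>q$. Then set
\[
C^m_n:=\{\sigma\in 2^n:\langle\sigma|G^m_n|\sigma\rangle>q/2\},
\]
and let $C^m$ be the class generated by $(C^m_n)_n$. Uniform computability of $(C^m_n)_{m,n}$ follows because the entries of $G^m_n$ are complex algebraic, so its diagonal entries are algebraic reals and comparison with the rational $q/2$ is decidable. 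That $C^m$ is a genuine $\Sigma^0_1$ class, i.e. $\llbracket C^m_n\rrbracket\subseteq\llbracket C^m_{n+1}\rrbracket$, I would deduce from the range condition in Definition \ref{defn:sigclass}: $\mathrm{range}(G^m_n\otimes I)\subseteq\mathrm{range}(G^m_{n+1})$ forces $G^m_n\otimes I\le G^m_{n+1}$, hence $\langle\sigma i|G^m_{n+1}|\sigma i\rangle\ge\langle\sigma|G^m_n|\sigma\rangle$ for $i\in\{0,1\}$, so $\sigma\in C^m_n$ implies $\sigma 0,\sigma 1\in C^m_{n+1}$.

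Next I would prove the two estimates. For the classical measure: $\mathrm{Tr}(G^m_n)\ge\sum_{\sigma\in C^m_n}\langle\sigma|G^m_n|\sigma\rangle>(q/2)|C^m_n|$, while the same range inclusion gives $\mathrm{Tr}(G^m_{n+1})\ge 2\,\mathrm{Tr}(G^m_n)$, so $2^{-n}\mathrm{Tr}(G^m_n)$ is nondecreasing in $n$ with supremum $\tau(G^m)\le 2^{-m}$; hence $\mu(C^m)=\sup_n 2^{-n}|C^m_n|\le 2^{1-m}/q$. For the lower bound on $\mu_\rho$, using diagonality of $\rho_{n_m}$ and $0\le\langle\sigma|G^m_{n_m}|\sigma\rangle\le 1$,
\[
q<\mathrm{Tr}(\rho_{n_m}G^m_{n_m})=\sum_{\sigma}\mu_\rho(\sigma)\langle\sigma|G^m_{n_m}|\sigma\rangle\le\mu_\rho(\llbracket C^m_{n_m}\rrbracket)+\tfrac{q}{2}\sum_{\sigma\notin C^m_{n_m}}\mu_\rho(\sigma),
\]
and since $\mu_\rho$ is a probability measure the final sum is $\le 1$, so $\mu_\rho(C^m)\ge\mu_\rho(\llbracket C^m_{n_m}\rrbracket)>q/2$ for every $m$.

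Finally I would reindex to obtain an honest test: fix $c\in\mathbb{N}$ with $2^{1-c}/q\le 1$ and set $\widetilde C^m:=C^{m+c}$. Then $\mu(\widetilde C^m)\le 2^{1-(m+c)}/q\le 2^{-m}$, so $(\widetilde C^m)_m$ is a classical MLT, while $\mu_\rho(\widetilde C^m)>q/2>\delta/2$ for all $m$, giving $\inf_m\mu_\rho(\widetilde C^m)\ge q/2>\delta/2$. The step I expect to be the crux is the calibration of the threshold: a threshold much larger than $\delta$ destroys the lower bound on $\mu_\rho(C^m)$, while a threshold too small relative to $2^{-m}$ inflates $\mu(C^m)$ beyond what a Martin-L\"of test allows, and $q/2$ with $q$ strictly between $\delta$ and $\inf_m\rho(G^m)$ is precisely what threads both requirements. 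This rounding is essentially the computational-basis instance of the general subspace-approximation lemma mentioned in the overview.
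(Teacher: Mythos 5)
Your proposal is correct and follows essentially the same route as the paper: threshold the diagonal entries of $G^m_n$ against a fixed rational cutoff to form the classical sets, verify nesting and computability from the range condition and algebraicity of the entries, bound the Lebesgue measure by the Markov-type count $|C^m_n|<\mathrm{Tr}(G^m_n)/(\text{cutoff})$ (the paper's Lemma \ref{lem:30}, which you reprove inline), and split the trace sum over strings inside and outside $C^m_{n_m}$ for the lower bound on $\mu_\rho$. The only differences are cosmetic: the paper uses cutoff $\delta/4$ with $\delta$ assumed rational and lands at $3\delta/4$, whereas you use $q/2$ for a rational $q\in(\delta,\inf_m\rho(G^m))$ and land at $q/2>\delta/2$, and you make the final reindexing explicit.
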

\begin{proof}
We isolate here a simple but useful property. 
\begin{lem}
\label{lem:30}
Let $n$ be a natural number, $E=(e_{i})_{i=1}^{2^{n}}$ be any orthonormal basis for $\mathbb{C}^{2^{n}}$ and $F$ be any Hermitian, orthonormal projection matrix acting on $\mathbb{C}^{2^{n}}$. For any $\delta>0$, let  \[S^{\delta}_{E,F}:= \left\{e_{i} \in E: \big<e_{i}|F|e_{i}\big> > \delta \right\}.\]
Then, $|S^{\delta}_{E,F}| < \delta^{-1} \text{Tr}(F)$.
\end{lem}
\begin{proof}
Note that since $F$ is a Hermitian orthonormal projection, $\big<e_{i}|F|e_{i}\big>=\big<Fe_{i}|F e_{i}\big> = |F e_{i}|^{2}\geq 0$. So,
\[\delta|S^{\delta}_{E,F}| < \sum_{e_{i} \in S^{\delta}_{E,F}}\big<e_{i}|F|e_{i}\big> \leq \sum_{i \leq 2^{n}}\big<e_{i}|F|e_{i}\big> = \text{Tr}(F).\]
\qedhere
\end{proof}
We now prove Theorem \ref{thm:30}. The intuition is as follows: given a special projection, we take the set of bitstrings (thought of as qubitstrings) `close' to it. If the special projection `captures' $\delta$ much mass of $\rho$, then the projection onto the span of these qubitstrings must capture atleast $\delta/2$ much mass of $\rho$.
$\sigma$ will always denote a finite length classical bit string and $|\sigma\big>$, the corresponding element of the standard computational basis. We may assume that $\delta$ is rational. Fix $m$. We describe the construction of $C^{m}= (C^{m}_{n})_{n\in \mathbb{N}}$ (See \ref{def:10}). Let  \[T^{m}_{n}:= \left\{\sigma \in 2^{n}: \big<\sigma|G^{m}_{n}|\sigma\big> > \dfrac{\delta}{4}\right\}.\]
These are those standard basis vectors `close' to $G^{m}_{n}$. Let
\[C^{m}_{n}= \bigcup_{\sigma \in T^{m}_{n}} \llbracket \sigma \rrbracket.\] 
\begin{lem}
$C^{m} $ is a $\Sigma^{0}_1$ class for any $m$.
\end{lem}
\begin{proof}
It is easy to see that for all $\sigma \in T^{m}_{n}$ and $i\in \{0,1\}$, \[\big<\sigma i|G^{m}_{n+1}|\sigma i \big > \geq \big<\sigma|G^{m}_{n}|\sigma\big> > \dfrac{\delta}{4}.\]
So, $\{\sigma i: \sigma \in T^{m}_n , i\in \{0,1\}\} \subseteq T^{m}_{n+1}$. Also note that $T^{m}_{n}$ is uniformly computable in $n$ since $G^{m}_{n}$ is.
\end{proof}
\begin{lem}
$(C^{m})_{m\in \mathbb{N}}$ is a MLT.
\end{lem}

\begin{proof}
Fix $m$. Letting $E=B^n$ and $F=G^{m}_n$ in Lemma \ref{lem:30} and by definition of q-MLT,
\[ |T^{m}_{n}| < \dfrac{4}{\delta}
2^n \tau(G^{m})\leq \dfrac{4}{\delta} 2^{n-m}.\]
So, $\mu(C^{m}) < 2^{-m}\dfrac{4}{\delta}$. $C^{m}$ is computable in $m$ since $G^{m}$ is.
\end{proof}
 
Now we show that $\inf_{m}\mu_{\rho}(C^{m})>\delta/2$.
Fix a $m$ and a $n$ (depending on $m$) such that Tr$(\rho_{n} G^{m}_{n})>\delta$. Let $\rho_{n} = \sum_{\sigma \in 2^{n} } \alpha_{\sigma} |\sigma\big>\big<\sigma|.$
Then, \begin{align*}
&\delta<\text{Tr}(\rho_{n} G^{m}_{n})
=\sum_{\sigma \in 2^{n} } \alpha_{\sigma} \big<\sigma|G^{m}_{n}|\sigma\big> = \sum_{\sigma \in T^{m}_{n}} \alpha_{\sigma} \big<\sigma|G^{m}_{n}|\sigma\big>  + \sum_{\sigma \in 2^{n}\backslash T^{m}_{n} } \alpha_{\sigma} \big<\sigma|G^{m}_{n}|\sigma\big>\\
&\leq \sum_{\sigma \in T^{m}_{n}}  \alpha_{\sigma}  + \sum_{\sigma \in 2^{n}\backslash T^{m}_{n}} \alpha_{\sigma} \dfrac{\delta}{4}\leq \sum_{\sigma \in T^{m}_{n}}  \alpha_{\sigma}  +  \dfrac{\delta}{4}=\text{Tr}(\rho_n  P_{C^{m}_{n}}) + \dfrac{\delta}{4}= \mu_{\rho}(C^{m}_{n}) + \dfrac{\delta}{4}.
\end{align*}
The last equality follows as $T_{n}^m$ is prefix free. So, $\mu_{\rho}(C^{m})\geq \mu_{\rho}(C^{m}_{n})\geq 3\delta/4$.
%
\end{proof}

Nies and Scholz showed that a measure, $\mu$ is Martin-L{\"o}f absolutely continuous if and only if for any Solovay test $(S_k)_k$, $\lim_{k}\mu (S_k )=0$\cite{unpublished2}. We give another proof using ideas used in the proof of Theorem \ref{thm:000}.
\begin{thm}
\label{thm:2}
Let $\rho$ be diagonal. If for some Solovay test $(S_k)_k$ and $\delta>0$ we have $\exists^{\infty} k , \mu_\rho (S_k )>\delta$, then there is a Martin-L{\"o}f test $(J_m )_m$ such that $\inf_{m}\mu_\rho (J^m )>\delta/2$. 
\end{thm}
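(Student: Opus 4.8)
\emph{Proof idea.} The plan is to imitate the proof of Theorem \ref{thm:000}, specialized to diagonal states: the components of the q-MLT built there were spans of maximal orthonormal subsets of the sets $A^m_t$, but here everything is diagonal in the standard basis, so those components become simply the cylinders generated by the length-$n$ strings of large ``multiplicity'' against the given Solovay test, and the role of Theorem \ref{thm:LinAStr} is played by an elementary double-counting inequality.

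First I would set up notation with two harmless reductions. Fix $k_0\in\mathbb N$ with $\sum_{k>k_0}\mu(S_k)<1$ (it exists since $\sum_k\mu(S_k)<\infty$; we need not compute it, and a machine may hard-code it, so all objects built from the tail $(S_k)_{k>k_0}$ stay uniformly computable). For $k>k_0$ and $n\in\mathbb N$ let $A^k_n\subseteq 2^n$ be the canonical stage-$n$ clopen approximation of $S_k$, so that $\llbracket A^k_n\rrbracket$ increases to $S_k$ and $\sigma\in A^k_n$ implies $\sigma0,\sigma1\in A^k_{n+1}$. For $\sigma\in 2^n$ set $\nu_n(\sigma):=\#\{k:k_0<k\le n,\ \sigma\in A^k_n\}$; then $\nu_n(\sigma)\le\nu_{n+1}(\sigma i)$ for $i\in\{0,1\}$, and $\sum_{\sigma\in 2^n}\nu_n(\sigma)=\sum_{k_0<k\le n}|A^k_n|\le 2^n\sum_{k>k_0}\mu(S_k)<2^n$.

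Next define $T^m_n:=\{\sigma\in 2^n:\nu_n(\sigma)>2^m\}$, $J^m_n:=\llbracket T^m_n\rrbracket$ and $J^m:=(J^m_n)_n$. Monotonicity of $\nu$ gives $J^m_n\subseteq J^m_{n+1}$, and $(T^m_n)_{m,n}$ is uniformly computable, so each $J^m$ is a $\Sigma^0_1$ class and $(J^m)_m$ a uniformly computable sequence of them; from $2^m|T^m_n|<\sum_{\sigma\in T^m_n}\nu_n(\sigma)\le 2^n$ we get $\mu(J^m)\le\sup_n 2^{-n}|T^m_n|\le 2^{-m}$, so $(J^m)_m$ is a Martin-L\"of test. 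For the failing direction fix $m$ and an integer $N\ge 2^{m+2}/\delta$; by hypothesis pick distinct $k_1,\dots,k_N>k_0$ with $\mu_\rho(S_{k_j})>\delta$, and since $\mu_\rho(\llbracket A^{k_j}_t\rrbracket)$ increases to $\mu_\rho(S_{k_j})$ as $t\to\infty$, pick $n\ge\max_j k_j$ with $\text{Tr}(\rho_n P_{A^{k_j}_n})=\mu_\rho(\llbracket A^{k_j}_n\rrbracket)>\delta$ for all $j$. Writing $\rho_n=\sum_{\sigma\in 2^n}\alpha_\sigma|\sigma\rangle\langle\sigma|$ (possible as $\rho$ is diagonal), $\mu_H(\sigma):=\#\{j\le N:\sigma\in A^{k_j}_n\}\le N$, and $R:=\{\sigma\in 2^n:\mu_H(\sigma)>2^m\}$, a split of the sum over $R$ and $2^n\setminus R$ gives
\[ N\delta<\sum_{j=1}^N\text{Tr}(\rho_n P_{A^{k_j}_n})=\sum_{\sigma\in 2^n}\alpha_\sigma\mu_H(\sigma)\le N\!\sum_{\sigma\in R}\alpha_\sigma+2^m\!\sum_{\sigma\in 2^n}\alpha_\sigma=N\,\mu_\rho(\llbracket R\rrbracket)+2^m, \]
so $\mu_\rho(\llbracket R\rrbracket)>\delta-2^m/N\ge 3\delta/4$. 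Since $\mu_H(\sigma)\le\nu_n(\sigma)$ we have $R\subseteq T^m_n$, hence $\mu_\rho(J^m)\ge\mu_\rho(J^m_n)\ge\mu_\rho(\llbracket R\rrbracket)>3\delta/4$; as $m$ was arbitrary, $\inf_m\mu_\rho(J^m)\ge 3\delta/4>\delta/2$.

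The substance here is small — it is the diagonal shadow of Theorem \ref{thm:000} — so the only real work is bookkeeping: choosing the approximations $A^k_n$ so that multiplicity is monotone under string extension, and observing that the non-effective choice of $k_0$ costs nothing for computability. The one quantitative point worth flagging is that we obtain the constant $3\delta/4>\delta/2$ rather than the $\delta/4$ that a black-box application of Theorem \ref{thm:LinAStr} would give; this is available precisely because diagonality makes each eigenvector of $\rho_n$ a basis vector, so the decomposition $e_i=Me_i+(e_i-Me_i)$ used in Lemma \ref{review} is trivial ($e_i-Me_i\in\{0,e_i\}$) and the factor-$2$ loss there disappears, leaving the clean bound $\delta-2^m/N$.
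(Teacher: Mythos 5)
Your proof is correct and follows essentially the same route as the paper's: you take $J^m_n$ to be the cylinders over the length-$n$ strings whose multiplicity against the stage-$n$ approximations of the $S_k$ exceeds a threshold of order $2^m$, bound $\mu(J^m)$ by the same double-counting inequality, and obtain the mass lower bound by the same split of $\sum_\sigma \alpha_\sigma \mu_H(\sigma)$ into high- and low-multiplicity strings. The only differences are cosmetic: your threshold is $2^m$ where the paper uses $2^{m-1}\delta$, and you use $N\approx 2^{m+2}/\delta$ indices in the failing direction, which yields the constant $3\delta/4$ in place of the paper's $\delta/2$.
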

The theorem will follow from the two lemmas below. Write $S^k = (S^{k}_n )_n$ as in Definition \ref{def:10}.
Without loss of generality, let $S^{k}_n =\emptyset$ for $k>n$. Let
\[C^{m}_{t}= \left\{ \sigma  \in 2^t : \sum_{k\leq t} |\big< \sigma| S^{k}_{t}|\sigma \big>|> 2^{m-1}\delta\right\},\] 
and let $G^{m}_{t}:= P_{C^{m}_{t}}$ (See Remark \ref{rem:notation}).
Let $G^{m}=(G^{m}_{n})_{n}$. It is easy to see that $G^m$ is a q-$\Sigma^{0}_1$ set for each $m$. Let $J^{m}_n := \llbracket C^{m}_n \rrbracket$ and $J^m = (J^{m}_n)_n$. One can check that that $(J^m )_m$ is a MLT if and only if $(G^{m})_{m}$ is quantum Martin-L{\"o}f test. So, $(J^m )_m$ is a MLT since:

\begin{lem}
$(G^{m})_{m}$ is a quantum Martin-L{\"o}f test.
\end{lem}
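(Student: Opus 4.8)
The plan is to verify the two clauses in the definition of a quantum Martin-L\"of test for $(G^m)_m$: that it is a uniformly computable sequence of q-$\Sigma^0_1$ sets, and that $\tau(G^m)\le 2^{-m}$ for every $m$. The first clause I would dispatch quickly, in parallel with the companion lemma inside the proof of Theorem \ref{thm:000}. Each $C^m_t$ is a decidable subset of $2^t$, uniformly in $m$ and $t$: the Solovay test $(S^k)_k$ is uniformly computable, $\delta$ is rational, and $\sum_{k\le t}\langle\sigma|S^k_t|\sigma\rangle$ is simply the integer counting those $k\le t$ for which $\sigma$ is among the length-$t$ generators of $S^k$ (here $S^k_t$ is read as the projection $P_{A^k_t}$, so the absolute values in the definition are redundant). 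Moreover $\sigma\in C^m_n$ forces $\sigma i\in C^m_{n+1}$ for $i\in\{0,1\}$, because $\langle\sigma i|S^k_{n+1}|\sigma i\rangle\ge\langle\sigma|S^k_n|\sigma\rangle$ (the level sets of a $\Sigma^0_1$ class increase with $n$); this gives $\mathrm{range}(G^m_n\otimes I)\subseteq\mathrm{range}(G^m_{n+1})$ and hence that each $G^m$ is a q-$\Sigma^0_1$ set, with $(G^m)_m$ uniform in $m$.

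The substance is the bound on $\tau(G^m)=\lim_n 2^{-n}\mathrm{Tr}(G^m_n)=\lim_n 2^{-n}|C^m_n|$. I would estimate $|C^m_n|$ by a Markov-type count: every $\sigma\in C^m_n$ contributes strictly more than $2^{m-1}\delta$ to $\sum_{\sigma\in 2^n}\sum_{k\le n}\langle\sigma|S^k_n|\sigma\rangle$, so $2^{m-1}\delta\,|C^m_n|<\sum_{k\le n}\mathrm{Tr}(S^k_n)$. Next I would rewrite $\mathrm{Tr}(S^k_n)=2^n\mu(S^k_n)$, since the rank of the projection onto the span of the length-$n$ generators of $S^k$ is $2^n$ times the Lebesgue measure of the $n$-th level approximation, and bound $\mu(S^k_n)\le\mu(S^k)$ using that these approximations increase to $S^k$. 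Summing over $k\le n$ gives $\sum_{k\le n}\mathrm{Tr}(S^k_n)\le 2^n\sum_k\mu(S^k)$, which is finite precisely because $(S^k)_k$ is a Solovay test. Hence $|C^m_n|<2^{n-m+1}\big(\sum_k\mu(S^k)\big)/\delta$, and therefore $\tau(G^m)\le\big(2\sum_k\mu(S^k)/\delta\big)2^{-m}$.

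Finally I would dispose of the constant $2\sum_k\mu(S^k)/\delta$. The clean route is to observe at the outset that deleting finitely many of the $S^k$ makes $\sum_k\mu(S^k)$ as small as we like while leaving intact the hypothesis that $\mu_\rho(S^k)>\delta$ for infinitely many $k$ (so the rest of the argument is unaffected); choosing $\sum_k\mu(S^k)\le\delta/2$ then yields $\tau(G^m)\le 2^{-m}$ outright, so $(G^m)_m$ is literally a q-MLT. I do not expect a real obstacle — this mirrors the estimate used for Theorem \ref{thm:000} — and the only points demanding care are the bookkeeping identification of $S^k_t$ with the projection $P_{A^k_t}$ so that traces and string-counts line up, and actually disposing of the constant rather than leaving $\tau(G^m)$ merely $O(2^{-m})$.
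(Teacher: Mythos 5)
Your proof is correct and follows essentially the same route as the paper: the same Markov-type count $2^{m-1}\delta\,|C^m_t| < \sum_{k\le t}\mathrm{Tr}(S^k_t)$, with the right-hand side controlled by the total measure of the Solovay test. The only difference is cosmetic: the paper normalizes via ``we may assume $\sum_k\tau(S^k)<1$'' and settles for $\tau(G^m)<2^{-m+1}/\delta$ (relying on the usual convention that an $O(2^{-m})$ bound suffices after reindexing), whereas you trim the tail to $\sum_k\mu(S^k)\le\delta/2$ to get the literal $2^{-m}$ bound.
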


\begin{proof}
Note that we may assume that $\sum_k \tau(S^k)<1$. Fix $m$ and $t$, let $C:= C^m_{t}$.
\begin{align*}
     2^t &\geq \sum_{k\leq t}  \text{Tr}(S^k_{t})\geq \sum _{k\leq t}  \sum_{\sigma \in C}   \big< \sigma|S^k_{t}|\sigma\big> =\sum_{\sigma\in C}\sum_{k\leq t}   \big< \sigma|S^k_{t}|\sigma\big>\\
            &>|C|2^{m-1}\delta =\text{Tr}(G^m_{t})2^{m-1}\delta.\qedhere
 \end{align*}
 So, $\tau(G^m)<2^{-m+1}/\delta$ for all $m$.
\end{proof}
\begin{lem}
We have that $\inf_{m}\mu_\rho (J^m )>\delta/2$.
\end{lem}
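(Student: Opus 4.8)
The plan is to fix $m$ and exhibit a single level $n$ at which $\mu_\rho(J^m_n) > 3\delta/4$. Since $G^m$ is a q-$\Sigma^0_1$ set (equivalently, $C^m=(C^m_n)_n$ generates a $\Sigma^0_1$ class), the cylinder sets $J^m_n = \llbracket C^m_n\rrbracket$ increase with $n$, so $\mu_\rho(J^m) = \sup_n \mu_\rho(J^m_n) \ge \mu_\rho(J^m_n) > 3\delta/4$; as $m$ is arbitrary, this gives $\inf_m \mu_\rho(J^m) \ge 3\delta/4 > \delta/2$, which is the claim. This mirrors the last paragraph of the proof of Theorem \ref{thm:30}, with an index-counting argument over the Solovay test playing the role of the single special projection used there.

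To produce such an $n$ for a given $m$, I would pick $2^{m+1}$ distinct indices $k_1,\dots,k_{2^{m+1}}$ with $\mu_\rho(S^{k_i}) > \delta$, which is possible by hypothesis. Because $\mu_\rho$ is a genuine measure and $\llbracket S^{k_i}_t\rrbracket$ increases to $S^{k_i}$, we have $\mu_\rho(S^{k_i}_t) \to \mu_\rho(S^{k_i}) > \delta$, so there is a common level $n$ (taken also $\ge \max_i k_i$) with $\mu_\rho(S^{k_i}_n) > \delta$ for every $i$. Writing the diagonal $\rho_n = \sum_{\sigma\in 2^n}\alpha_\sigma|\sigma\big>\big<\sigma|$ with $\sum_\sigma\alpha_\sigma = 1$, the identity $\mu_\rho(S^{k_i}_n) = \mathrm{Tr}(\rho_n S^{k_i}_n) = \sum_{\sigma\in S^{k_i}_n}\alpha_\sigma$ gives $\sum_{\sigma\in S^{k_i}_n}\alpha_\sigma > \delta$ for each $i$; summing over $i$ and swapping the order of summation yields $2^{m+1}\delta < \sum_{\sigma\in 2^n}\alpha_\sigma\,|\{i : \sigma\in S^{k_i}_n\}|$. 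I then split the outer sum at $C^m_n$: on $C^m_n$ bound $|\{i : \sigma\in S^{k_i}_n\}|$ trivially by $2^{m+1}$, and off $C^m_n$, using that each $k_i\le n$, bound it by $\sum_{k\le n}|\big<\sigma|S^k_n|\sigma\big>| \le 2^{m-1}\delta$, the last step being exactly the definition of $C^m_n$. This gives $2^{m+1}\delta < 2^{m+1}\sum_{\sigma\in C^m_n}\alpha_\sigma + 2^{m-1}\delta\cdot\sum_{\sigma\notin C^m_n}\alpha_\sigma \le 2^{m+1}\mu_\rho(J^m_n) + 2^{m-1}\delta$, since $C^m_n\subseteq 2^n$ is prefix free (so $\sum_{\sigma\in C^m_n}\alpha_\sigma = \mathrm{Tr}(\rho_n P_{C^m_n}) = \mu_\rho(J^m_n)$) and $\sum_\sigma\alpha_\sigma = 1$; dividing by $2^{m+1}$ leaves $\mu_\rho(J^m_n) > \delta - \delta/4 = 3\delta/4$.

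The one thing that genuinely needs care — and the reason I take $2^{m+1}$ rather than the more natural $2^m$ indices — is getting the \emph{strict} inequality $\inf_m \mu_\rho(J^m) > \delta/2$: with $2^m$ indices the argument only delivers $\mu_\rho(J^m_n) > \delta/2$, whose infimum over $m$ could be exactly $\delta/2$, whereas the extra factor (chosen to dominate the threshold $2^{m-1}\delta$ appearing in the definition of $C^m_n$) buys the fixed slack $3\delta/4$ at every $m$. Apart from this bookkeeping there is no real obstacle: the monotonicity of $(J^m_n)_n$ is precisely the structural fact already recorded for $G^m$, and the translations $\mathrm{Tr}(\rho_n P_S) = \mu_\rho(\llbracket S\rrbracket)$ for prefix-free $S\subseteq 2^n$ together with $\big<\sigma|S^k_n|\sigma\big>\in\{0,1\}$ (each $S^k_n$ being a projection onto a set of computational basis vectors here) are the standard dictionary between the diagonal-state and Cantor-space pictures used throughout this section.
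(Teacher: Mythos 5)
Your proof is correct and follows the same double-counting argument as the paper's: fix a level $n$ at which $2^{m}$-many (for you, $2^{m+1}$-many) of the sets $S^{k}_{n}$ carry $\mu_\rho$-mass more than $\delta$, sum over those indices, and split the resulting sum over $\sigma\in 2^{n}$ according to membership in $C^{m}_{n}$, bounding the off-$C^{m}_{n}$ contribution by $2^{m-1}\delta$ straight from the definition of $C^{m}_{n}$. The only difference is your choice of $2^{m+1}$ rather than $2^{m}$ indices, which yields the uniform bound $\mu_\rho(J^{m}_{n})>3\delta/4$ and hence genuinely delivers the strict inequality $\inf_{m}\mu_\rho(J^{m})>\delta/2$, whereas the paper's count gives only $\mu_\rho(J^{m})>\delta/2$ for each $m$ and therefore, strictly speaking, only $\inf_{m}\mu_\rho(J^{m})\geq\delta/2$ --- a harmless slip (any positive lower bound suffices for the intended application) that your bookkeeping repairs.
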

\begin{proof}
Let $m$ be arbitrary. By assumption, there are infinitely many $k$s such that  $\mu_\rho (S^{k})>\delta$. For each of these, there is an $n$ such that $\mu_\rho (S^{k}_n )>\delta$. So, fix a $n$ so that there are $2^{m}$ many $ks$ such that $\mu_\rho( S^{k}_n) > \delta$.
Since $\rho_n$ is diagonal, let 
 \[\rho_n = \sum_{\sigma \in 2^n} \alpha_{\sigma}|\sigma \big> \big< \sigma |.\]  

By the choice of $n$, pick $M\subseteq\{1,2\cdots ,n\}$ such that $|M|=2^m$ and $ \mu_\rho (S^{k}_n) > \delta$ for each $k$ in $M$. Note that $\mu_\rho (S^{k}_n)= \text{Tr}(\rho_n P_{S^{k}_n})$, since $S^{k}_n$ is prefix free. We write $\text{Tr}(\rho_n P_{S^{k}_n})=\text{Tr}(\rho_n S^{k}_n)$ to avoid clutter. So,

\begin{align*}
    2^{m}\delta
    &<\sum_{k\in M}\mu_\rho (S^{k}_n)= \sum_{k\in M}\text{Tr}(\rho_n S^{k}_n)
    =   \sum_{\sigma \in 2^n}\alpha_{\sigma} \sum_{k\in M} \text{Tr}(|\sigma \big> \big< \sigma | S^{k}_n)\\
    &=\sum_{ \sigma \in C^{m}_{n}}\alpha_{\sigma} \sum_{k\in M} \big< \sigma |S ^{k}_n\sigma \big> + \sum_{ \sigma \notin C^{m}_{n}}\alpha_{\sigma} \sum_{k\in M} \big< \sigma |S ^{k}_n\sigma \big>\\
    &\leq \sum_{ \sigma \in C^{m}_{n}}\alpha_{\sigma} \sum_{k\in M} \big< \sigma |S ^{k}_n\sigma \big> + 2^{m-1}\delta\\
    &\leq 2^{m} \sum_{ \sigma \in C^{m}_{n}}\alpha_{\sigma} + 2^{m-1}\delta.
\end{align*}
The second last inequality follows from the definition of $G^{m}_{n}$ and convexity; the last from the choice of $M$.
Finally, we get that,
\[\delta/2 < \sum_{\sigma \in C^{m}_{n}}\alpha_{\sigma} = \mu_\rho(\llbracket C^{m}_{n}\rrbracket)\leq \mu_\rho (J^m ).\] \qedhere
\end{proof}

Next, we discuss a subset of the diagonal states; the Dirac delta measures on Cantor space.
\subsection{Quantum randomness on Cantor Space}
\label{schnorr}
A classical bitstring can be thought of as a diagonal state: If $X$ is a real in Cantor space, the state $\rho_{X}=(\rho_{n})_n$ given by $\rho_{n}=|X\upharpoonright n \big>\big<X\upharpoonright n |$ is the quantum analog of $X$. Do the quantum randomness notions agree with classical notions when restricted to Cantor space? By Theorem \ref{thm:30}, we see that $\rho_{X}$ is q-MLR if and only if $X$ is MLR. Further, $\rho_{X}$ is q-MLR if and only if $\rho_{X}$ is weak Solovay random\cite{prep}. Also, $X$ is MLR if and only if it passes all interval Solovay tests (the classical analog of strong Solovay tests). So, we see that q-MLR and weak Solovay randomness agree with the classical versions on Cantor space. What about quantum Schnorr randomness?
\begin{lem}
\label{thm:schcla}
$\rho_{X}$ is quantum Schnorr random if and only if $X$ is Schnorr random.
\end{lem}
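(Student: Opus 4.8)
The plan is to prove the two implications separately, each by contraposition, transferring a failed test from one world to the other via Lemma~\ref{lem:30}; the one delicate point is that a Schnorr test carries a computability requirement on its total measure, and this must be preserved under the transfer.

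\emph{If $X$ is not Schnorr random, then $\rho_X$ is not quantum Schnorr random.} Fix a Schnorr test $(S^m)_m$ that $X$ fails, so $X\in S^m$ for infinitely many $m$. Each $S^m$ is finitely generated, so without altering $S^m$ as a subset of $2^\omega$ (hence without altering $\mu(S^m)$ or the computability of $\sum_m\mu(S^m)$, and uniformly computably in $m$) we may take its generating set to be an antichain $A^m\subseteq 2^{\ell_m}$ of strings of a common length $\ell_m$. Let $P^m:=P_{A^m}$, a $2^{\ell_m}\times 2^{\ell_m}$ special projection. Then $\tau(P^m)=2^{-\ell_m}|A^m|=\mu(S^m)$, so $(P^m)_m$ is a computable sequence of special projections with $\sum_m\tau(P^m)=\sum_m\mu(S^m)$ a computable real, i.e.\ a quantum Schnorr test. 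Since $\rho_X(P^m)=\text{Tr}((\rho_X)_{\ell_m}P^m)=\big<X\upharpoonright\ell_m|P^m|X\upharpoonright\ell_m\big>$ equals $1$ precisely when $X\upharpoonright\ell_m\in A^m$, that is when $X\in S^m$, we get $\rho_X(P^m)=1>1/2$ for infinitely many $m$; hence $\rho_X$ fails the quantum Schnorr test $(P^m)_m$ at level $1/2$.

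\emph{If $\rho_X$ is not quantum Schnorr random, then $X$ is not Schnorr random.} Fix a quantum Schnorr test $(S^m)_m$ --- a computable sequence of special projections with $\sum_m\tau(S^m)$ a computable real --- and a rational $\delta\in(0,1)$ with $\rho_X(S^m)>\delta$ for infinitely many $m$; writing $S^m$ as a $2^{n_m}\times 2^{n_m}$ matrix, this says $\big<X\upharpoonright n_m|S^m|X\upharpoonright n_m\big>>\delta$. In the spirit of Theorem~\ref{thm:30}, set $T^m:=\{\sigma\in 2^{n_m}:\big<\sigma|S^m|\sigma\big>>\delta/2\}$ and $C^m:=\llbracket T^m\rrbracket$. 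For every $m$ with $\rho_X(S^m)>\delta$ we have $X\upharpoonright n_m\in T^m$, hence $X\in C^m$, so $X\in C^m$ for infinitely many $m$. Each $C^m$ is generated by the finite set $T^m$, and $(T^m)_m$ is uniformly computable from $(S^m)_m$ and the rational $\delta$, so $(C^m)_m$ is a uniformly computable sequence of finitely generated $\Sigma^0_1$ classes. Applying Lemma~\ref{lem:30} with $E=B^{n_m}$, $F=S^m$, and threshold $\delta/2$ gives $|T^m|<(2/\delta)\text{Tr}(S^m)$, so $\mu(C^m)=2^{-n_m}|T^m|<(2/\delta)\tau(S^m)$ and therefore $\sum_m\mu(C^m)\leq(2/\delta)\sum_m\tau(S^m)<\infty$; thus $(C^m)_m$ is an interval Solovay test.

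It remains --- and this is the crux --- to see that $\sum_m\mu(C^m)$ is a computable real. Each $\mu(C^m)=2^{-n_m}|T^m|$, and each $\tau(S^m)=2^{-n_m}\,\text{rank}(S^m)$, is a rational we can compute exactly; since the exact rationals $\sum_{m<N}\tau(S^m)$ increase to the computable real $\sum_m\tau(S^m)$, one can, given $k$, effectively find $N$ with $\sum_{m\geq N}\tau(S^m)<2^{-k}$, and then $\sum_{m<N}\mu(C^m)$ approximates $\sum_m\mu(C^m)$ to within $(2/\delta)2^{-k}$. Hence $\sum_m\mu(C^m)$ is computable, so $(C^m)_m$ is a Schnorr test; as $X$ fails it, $X$ is not Schnorr random. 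I expect every step except this last effective tail estimate to be routine bookkeeping; the tail estimate is where the computability hypothesis on $\sum_m\tau(S^m)$ is genuinely needed.
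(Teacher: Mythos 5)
Your proposal is correct and follows essentially the same route as the paper: for the nontrivial direction the paper likewise takes the set $T^r$ of standard basis vectors $\sigma$ with $\big<\sigma|Q^r|\sigma\big>$ above the threshold, bounds $|T^r|$ via Lemma~\ref{lem:30}, and deduces computability of $\sum_r \tau(T^r)$ from that of $\sum_r\tau(Q^r)$ (your choice of $\delta/2$ rather than $\delta$ as the cutoff is immaterial). You merely spell out two points the paper leaves terse --- the effective tail estimate giving computability of the transferred sum, and the ``trivial'' converse via padding a Schnorr test to projections of fixed length --- and both are handled correctly.
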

\begin{proof}
Let $(Q^r)_r$ be a quantum Schnorr test which $\rho_X$ fails at some rational $\delta$. Let $Q^r$ be $n_r$ by $n_r$. Using notation of Lemma \ref{lem:30}, let $T^{r}:= S^{\delta}_{E,Q^{r}}$ where $E$ is the set of length $n_r$ standard basis vectors. We think of $T^r$ as a set of classical bitstrings. By Lemma \ref{lem:30}, $\tau(T^{r})\leq \delta^{-1}\tau(Q^{r})$. So,  $\sum_{r}2^{-n_{r}}|T^{r}|=\sum_{r}\tau(T^{r})  \leq \delta^{-1}\sum_{r}\tau(Q^{r})$ is computable because $\sum_{r}\tau(Q^{r})$ is. So, $(T^r)_r$ is a finite total Solovay test. Let $m$ be one of the infinitely many $r$ such that $\delta<$Tr$(\rho_{X} (Q^r))$. Then, by definition, $X\upharpoonright n_r $ is in  $T^r$. So, $X$ fails $(T_{r})_r$ and hence is not Schnorr random (by 7.2.21 and 7.2.22 in the book by Downey and Hirschfeldt\cite{misc1}). The other direction is trivial. \qedhere

\end{proof}
\subsection{Relating the randomness notions}
We have seen that \[\text{Solovay R} = \text{q-MLR} \subseteq \text{weak Solovay R} \subset \text{quantum Schnorr R}.\]
The equality follows by Theorem \ref{thm:000}. The second inclusion is strict as there is a bitstring which is Schnorr random but which fails some interval Solovay test \cite{misc1} and since by Lemma \ref{thm:schcla}, this bitstring must be quantum Schnorr random. It is open whether the first inclusion is strict.
\section{A law of large numbers for quantum Schnorr randoms}
\label{sec:lln}
The law of large numbers (LLN), specialized to Cantor space says that the limiting proportion of ones is equal to 0.5 for almost every bitstring, with respect to the uniform measure. Random bitstrings satisfy the LLN. In fact, satisfying the LLN is the weakest form of randomness \cite{misc1}. This is quite intuitive; one would not call a bitstring `random' if it has more ones than zeroes in the limit. Analogously, we expect even our weakest notion of quantum randomness (quantum Schnorr randomness) to satisfy a quantum analogue of the LLN. This would suggest that the quantum randomness notions are `natural' and mirror the classical situation. In this section, $\sigma$ will always denote a classical bitstring or a  classical bitstring thought of as a qubit string. It should be clear from context which interpretation is intended.

\begin{defn}
\cite{logicblog} $\rho$ \emph{satisfies the LLN} if lim$_{n}n^{-1}\sum_{i<n}$Tr$(\rho_{n}P^{n}_{i})=0.5$, where for all $i\geq 0, n>0$, \[P^{n}_{i}:= \sum_{\sigma: |\sigma|=n, \sigma(i)=1}|\sigma\big>\big<\sigma| = \bigotimes_{k=1}^{i-1} I_2 \otimes \begin{bmatrix}

0 & 0 \\
0 & 1\\

\end{bmatrix} \otimes \bigotimes_{k=i+1}^n I_2,\]

\end{defn}
where $I_2$ is the two-by-two identity matrix. The intuition is that $P^{n}_i$ is the projection observable which measures whether a given density matrix on $n$ qubits `has a one in the $i^{th}$ spot'. Tr$(\rho_{n}P^{n}_{i})$ is the probability that $\rho_n$ `has a one in the $i^{th}$ spot'. A state satisfies the LLN if the average over $i$ of these probabilities tends to $0.5$ as $n$ goes to infinity.
\begin{thm}
\label{thm:lln}
Quantum Schnorr random states satisfy the LLN.
\end{thm}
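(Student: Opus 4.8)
The plan is to build, from a putative failure of the LLN for a state $\rho$, a quantum Schnorr test that $\rho$ fails, contradicting quantum Schnorr randomness. The starting observation is the classical analogue: the set of bitstrings violating the LLN is captured by a Schnorr test built from large-deviation estimates (Chernoff/Hoeffding bounds), and these bounds are what make the measures of the test components summable with a \emph{computable} sum. So first I would fix rationals $\epsilon>0$ and suppose, toward a contradiction, that $\rho$ is quantum Schnorr random but $n^{-1}\sum_{i<n}\mathrm{Tr}(\rho_n P^n_i)$ does not converge to $1/2$; then there is a rational $\epsilon>0$ and infinitely many $n$ with $\big| n^{-1}\sum_{i<n}\mathrm{Tr}(\rho_n P^n_i) - 1/2\big| > \epsilon$.

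The key translation step is to express the averaged quantity $n^{-1}\sum_{i<n}\mathrm{Tr}(\rho_n P^n_i)$ in terms of the diagonal of $\rho_n$ in the computational basis. Writing $\mathrm{Tr}(\rho_n P^n_i) = \sum_{\sigma:|\sigma|=n}\langle\sigma|\rho_n|\sigma\rangle\,[\sigma(i)=1]$, we get
\begin{align}
\frac{1}{n}\sum_{i<n}\mathrm{Tr}(\rho_n P^n_i) = \sum_{\sigma:|\sigma|=n}\langle\sigma|\rho_n|\sigma\rangle\cdot\frac{\#\{i<n:\sigma(i)=1\}}{n}.
\end{align}
Thus the off-diagonal entries of $\rho_n$ are irrelevant to the LLN; only the probability vector $p_\sigma := \langle\sigma|\rho_n|\sigma\rangle$ matters, and $\sum_{i<n}\mathrm{Tr}(\rho_n P^n_i)/n$ is exactly the $p$-expectation of the proportion of ones. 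Now define, for each $m$ and each length $n$, the set $C^m_n$ of $\sigma\in 2^n$ whose proportion of ones deviates from $1/2$ by more than, say, $\epsilon/2$; let $Q^m_n := P_{C^m_n}$ and $Q^m = (Q^m_n)_n$. One checks range-monotonicity fails in general, so instead I would take $Q^m_n$ to be the projection onto the span of $\{|\sigma\rangle : \sigma\in 2^n, \exists t\le n\ \sigma\!\upharpoonright\! t \in C^m_t\}$ (the "cumulative" version used already in the paper, e.g. in Theorem \ref{thm:00} and Theorem \ref{thm:2}), which is automatically nested and still has $\tau(Q^m_n) = 2^{-n}|\{\sigma: \ldots\}|$ controlled by the Hoeffding bound $2^{-n}|C^m_t|\le 2e^{-\epsilon^2 t/2}$ summed appropriately over $t$; choosing the deviation threshold to shrink with $m$ (e.g. using $C^m_t$ defined with threshold $\epsilon/2$ only for $t\ge$ some computable $N(m)$, or weighting) makes $\sum_m \tau(Q^m)$ finite with a \emph{computable} sum, since it is a computable combination of rationals and the computable real $\sum$ of a geometric-type series. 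This is the point where quantum Schnorr (rather than merely weak Solovay) is used: we need the sum to be a computable real, which the explicit Hoeffding estimates provide.

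Finally I would show $\rho$ fails $(Q^m)_m$ at some level $\delta>0$. Fix $m$; I need some $n$ with $\mathrm{Tr}(\rho_n Q^m_n)>\delta$. Take $n$ among the infinitely many lengths with $|n^{-1}\sum_{i<n}\mathrm{Tr}(\rho_n P^n_i)-1/2|>\epsilon$. By the displayed identity, the $p_\sigma$-expectation of the proportion-of-ones deviates from $1/2$ by more than $\epsilon$; by a reverse-Markov / averaging argument (the proportion lies in $[0,1]$, so if its mean is $\ge 1/2+\epsilon$ then the $p$-mass on $\sigma$ with proportion $\ge 1/2+\epsilon/2$ is at least $\epsilon/(2-2\epsilon) =: \delta$, and symmetrically for the other tail), we get $\sum_{\sigma\in C^m_n} p_\sigma > \delta$, hence $\mathrm{Tr}(\rho_n Q^m_n) \ge \mathrm{Tr}(\rho_n P_{C^m_n}) > \delta$. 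Since this holds for arbitrary $m$, $\inf_m \rho(Q^m)\ge\delta>0$, so $\rho$ fails the quantum Schnorr test $(Q^m)_m$ — contradiction.

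The main obstacle I anticipate is bookkeeping around the \emph{computability} of $\sum_m\tau(Q^m)$: one must set up the deviation thresholds and the dependence of $C^m_t$ on both $m$ and $t$ so that (i) each $Q^m$ is a genuine q-$\Sigma^0_1$ set (nesting of ranges across $n$), (ii) $\tau(Q^m)\le 2^{-m}$ or at least the $\tau(Q^m)$ are summable, and (iii) the total sum is not merely finite but a computable real, which forces the Hoeffding bound to be used in an explicit, effectively-summable form rather than as a soft $o(1)$ estimate. The large-deviation inequality itself is standard, but threading it through the definition of a quantum Schnorr test while keeping everything uniformly computable in $m$ is where the care lies.
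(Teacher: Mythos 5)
Your mathematical engine is the right one and is essentially the paper's: observe that each $P^n_i$ is diagonal in the computational basis, so only the diagonal probability vector $p_\sigma=\langle\sigma|\rho_n|\sigma\rangle$ matters; use a Chernoff/Hoeffding bound to make the $\tau$-measures of the deviation sets not just summable but \emph{computably} summable; and use a reverse-Markov averaging argument to show that a deviation of the mean by $\epsilon$ forces $p$-mass at least some fixed $\delta>0$ on the strings whose proportion of ones deviates by $\epsilon/2$. These are exactly the ingredients of the paper's proof (Lemmas \ref{lem:cor} and \ref{lem:cor2}, and the Chernoff estimate in Theorem \ref{thm:llngen}, of which Theorem \ref{thm:lln} is the case $a=0$, $b=1$, $p=1/2$).

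The genuine problem is in how you package the test. By Definition \ref{defn:Schnorr}, a quantum Schnorr test is a \emph{strong Solovay test} --- a computable sequence of single special projections $(S^m)_m$ with $\sum_m\tau(S^m)$ a computable real --- and failing it means $\rho(S^m)>\delta$ for infinitely many $m$. It is \emph{not} a nested, Martin-L{\"o}f--style sequence of q-$\Sigma^0_1$ sets with the $\inf_m$ failing criterion. Your cumulative construction of $(Q^m)_m$ therefore builds an object of the wrong type: as written you have produced a quantum Martin-L{\"o}f test that $\rho$ fails, which only shows $\rho$ is not q-MLR --- a strictly weaker conclusion than the theorem requires, since q-MLR states form a subset of the quantum Schnorr random states. (You cannot repair this by extracting the witnessing levels $Q^m_{n_m}$, because $n_m$ depends on $\rho$ and need not be computable.) The fix is to discard the $m$-indexing and the nesting worry entirely: for each $n$ let $C_n\subseteq 2^n$ be the strings whose proportion of ones deviates from $1/2$ by more than $\epsilon/2$ and set $S_n:=P_{C_n}$. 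Then $(S_n)_n$ is a computable sequence of special projections, $\sum_n\tau(S_n)=\sum_n 2^{-n}|C_n|$ is computable because the Hoeffding bound gives an effective tail estimate, and your reverse-Markov step already delivers $\mathrm{Tr}(\rho_n S_n)>\delta$ for infinitely many $n$, which is precisely the Solovay-style failing criterion a quantum Schnorr test demands. With that repackaging your argument coincides with the paper's.
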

Theorem \ref{thm:llngen}, a more general version of the law of large numbers implies Theorem \ref{thm:lln}. We need some definitions to state the general version:
\begin{defn}
Fix a computable $p \in [0,1]$. Let $b_p=(b_{n})_{n}$ be a state of the form $b_{n}=\otimes_{1}^{n} M_p$ where
\[M_p =
\begin{bmatrix}

p & 0 \\
0 & 1-p\\

\end{bmatrix}.\]
A computable sequence of special projections is a $p-$\emph{quantum Schnorr test} if $\sum_{k\in \omega} b_p(S^{k})$ is computable. A state $\rho$ is said to be \emph{p-quantum Schnorr random} if it passes all p-quantum Schnorr tests.
\end{defn} Quantum Schnorr random states are $\dfrac{1}{2}-$ quantum Schnorr random states as $b_{1/2}=\tau$. $b_p$ is the quantum analogue of the p-Bernoulli measure on Cantor space\cite{misc1}. For any computable $p\in [0,1]$, define p-LLN, a general version of the LLN, as:
\begin{defn} $\rho$ \emph{satisfies the p-LLN} if for any pair $a,b$ of computable reals, lim$_{n}n^{-1}\sum_{i<n}$Tr$(\rho_{n}Q^{n}_{i})=ap+b(1-p)$, where for all $n \geq i> 0$, \[Q^{n}_{i}:=  \bigotimes_{k=1}^{i-1} I_2 \otimes \begin{bmatrix}

a & 0 \\
0 & b\\

\end{bmatrix} \otimes \bigotimes_{k=i+1}^n I_2.\]

\end{defn}
The computational basis is an eigenbasis for $Q^{n}_i$ where the vectors with $0$ in the $i^{th}$ qubit have eigenvalue $a$ and those with $1$ in the $i^{th}$ qubit have eigenvalue $b$. So, Tr$(\rho_{n}Q^{n}_{i})$ is the expected value of the random variable which equals $a$ if measuring the $i^{th}$ qubit of $\rho_n$ yields zero and equals $b$ if the measurement yields one. If the average over $i$ of these expected values tends to $pa+(1-p)b$ as $n$ goes to infinity, then the state satisfies the p-LLN. Almost every bitstring, with respect to the p-Bernoulli product measure on Cantor space, satisfies the classical analogue of the p-LLN. The following is hence a classical `almost all' theorem which carries over to quantum Schnorr randoms.
\begin{thm}
\label{thm:llngen}
p-quantum Schnorr random states satisfy the p-LLN.
\end{thm}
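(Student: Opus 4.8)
The plan is to argue the contrapositive: a state that fails the $p$-LLN fails some $p$-quantum Schnorr test. This is the effective version of the classical fact that $b_p$-random bitstrings have limiting frequency of ones equal to $1-p$.

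\emph{Reduction to a frequency statement.} Each $Q^n_i$ is diagonal in the computational basis, with eigenvalue $a$ on length-$n$ strings having a $0$ in coordinate $i$ and eigenvalue $b$ on those with a $1$ there. Writing $\mu_n(\sigma):=\big<\sigma|\rho_n|\sigma\big>$ for the diagonal of $\rho_n$ (a probability distribution on $2^n$, by the partial-trace coherence of $\rho$) and $w(\sigma)$ for the number of ones in $\sigma$, a direct computation gives
\[
n^{-1}\sum_{i<n}\text{Tr}(\rho_n Q^n_i)=a+(b-a)F_n(\rho),\qquad F_n(\rho):=n^{-1}\sum_{i<n}\text{Tr}(\rho_n P^n_i)=\sum_{\sigma\in 2^n}\mu_n(\sigma)\tfrac{w(\sigma)}{n},
\]
while $ap+b(1-p)=a+(b-a)(1-p)$. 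Hence, for $a\ne b$ (the case $a=b$ being trivial), the $(a,b)$-instance of the $p$-LLN holds for $\rho$ if and only if $F_n(\rho)\to 1-p$. So the whole theorem reduces to this single limit --- independent of $a,b$ --- and Theorem \ref{thm:lln} is the special case $p=\tfrac12$. Note that $F_n(\rho)$ is the $\mu_n$-expectation of the frequency of ones among the first $n$ qubits.

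\emph{The tests.} For each rational $\epsilon>0$ I would fix a rational $\tilde p$ with $|\tilde p-p|<\epsilon/8$ and set
\[
B^{\epsilon}_n:=\Big\{\sigma\in 2^n:\ \big|\tfrac{w(\sigma)}{n}-(1-\tilde p)\big|\ge\tfrac{\epsilon}{2}\Big\},\qquad S^{\epsilon,n}:=\sum_{\sigma\in B^{\epsilon}_n}|\sigma\big>\big<\sigma|.
\]
Since all thresholds are rational, $B^{\epsilon}_n$ is a decidable finite subset of $2^n$, uniformly in $n$; thus each $S^{\epsilon,n}$ is a special projection and $(S^{\epsilon,n})_n$ is a computable sequence. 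Under the Bernoulli state $b_p$, $w(\sigma)$ has the law of a sum of $n$ i.i.d.\ Bernoulli$(1-p)$ variables, and $B^{\epsilon}_n\subseteq\{\sigma:\ |w(\sigma)/n-(1-p)|\ge\epsilon/4\}$, so a Chernoff/Hoeffding bound gives $b_p(S^{\epsilon,n})\le 2e^{-cn\epsilon^2}$ for an absolute constant $c>0$ and all $n$. Each $b_p(S^{\epsilon,n})$ is a computable real uniformly in $n$ (it is the finite sum $\sum_j\binom{n}{j}p^{n-j}(1-p)^j$ over those $j$ with $|j/n-(1-\tilde p)|\ge\epsilon/2$), and $\sum_{n>N}b_p(S^{\epsilon,n})\le 2e^{-cN\epsilon^2}/(1-e^{-c\epsilon^2})$, which vanishes computably in $N$. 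Hence $\sum_n b_p(S^{\epsilon,n})$ is a computable real, and $(S^{\epsilon,n})_n$ is a $p$-quantum Schnorr test.

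\emph{Closing the argument.} Suppose $\rho$ fails the $p$-LLN. By the reduction, $F_n(\rho)\not\to 1-p$, so there are a rational $c>0$ and infinitely many $n$ with $|F_n(\rho)-(1-p)|>c$; run the construction above with $\epsilon=c/2$. For such $n$, every $\sigma\notin B^{\epsilon}_n$ has $|w(\sigma)/n-(1-p)|<c$, while $w(\sigma)/n$ and $1-p$ lie in $[0,1]$, so
\[
c<|F_n(\rho)-(1-p)|\le\sum_{\sigma\in 2^n}\mu_n(\sigma)\big|\tfrac{w(\sigma)}{n}-(1-p)\big|\le \text{Tr}(\rho_n S^{\epsilon,n})+\tfrac{c}{2},
\]
whence $\text{Tr}(\rho_n S^{\epsilon,n})>c/2$ for infinitely many $n$. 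Thus $\rho$ fails the $p$-quantum Schnorr test $(S^{\epsilon,n})_n$ at level $c/2>0$, so it is not $p$-quantum Schnorr random --- a contradiction. I expect the one genuinely non-routine point to be checking that $(S^{\epsilon,n})_n$ is a \emph{Schnorr} test rather than merely a Solovay test: this is exactly where one needs a concentration bound with a \emph{computable} rate of decay (Chebyshev alone gives the non-summable $O(1/n)$), together with the observation that the binomial tail probabilities $b_p(S^{\epsilon,n})$ are uniformly computable reals admitting a computable tail bound, so that their total sum is a computable real.
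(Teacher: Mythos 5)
Your proof is correct, but it takes a genuinely different route from the paper's. You first observe that every $Q^n_i$ is diagonal in the computational basis, so the whole $(a,b)$-family of statements collapses (for $a\neq b$) to the single frequency statement $F_n(\rho)\to 1-p$ about the diagonal distribution $\mu_n$ of $\rho_n$; you then build one two-sided deviation test per rational $\epsilon$ and close the argument with a triangle-inequality split of $\sum_\sigma \mu_n(\sigma)\lvert w(\sigma)/n-(1-p)\rvert$ over $B^\epsilon_n$ and its complement. The paper instead works with the operator $A_n=n^{-1}\sum_{i\leq n}Q^n_i$ directly and extracts the lower bound $\mathrm{Tr}(\rho_n S_n)\geq C\delta$ from a reverse Markov inequality lifted to Hermitian observables (Lemmas \ref{lem:cor} and \ref{lem:cor2}); this forces a one-sided test, a case split on $M<1$ versus $M\geq 1$, and a separate (omitted) argument for deviations below $M$. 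Both proofs share the essential ingredients --- a Chernoff/Hoeffding bound to make $\sum_n b_p(S_n)$ a computable real (you are right that this, rather than mere finiteness, is the one point that genuinely matters for a Schnorr test), and the reduction of a trace condition to a statement about the diagonal of $\rho_n$. What your version buys is uniformity and economy: the reduction to $F_n$ handles all computable pairs $(a,b)$ and both directions of deviation at once with no case analysis, and replaces the operator-valued reverse Markov lemma by an elementary expectation bound; what the paper's version buys is a pair of reusable lemmas (\ref{lem:cor}, \ref{lem:cor2}) that apply to observables not diagonal in the computational basis, although that extra generality is not needed here. Two small points worth making explicit if you write this up: you should note that $\mathrm{Tr}(\rho_n S^{\epsilon,n})=\sum_{\sigma\in B^\epsilon_n}\mu_n(\sigma)$ precisely because $S^{\epsilon,n}$ is diagonal, and your bound on the complement of $B^\epsilon_n$ is really $\lvert w(\sigma)/n-(1-p)\rvert<5c/16<c/2$, which is what the displayed inequality actually uses.
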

\begin{proof}
 Let $\rho$ be a p-quantum Schnorr random. Towards a contradiction, let $a,b$ and $M:=ap+(1-p)b$ be reals so that for $Q^n_i$ as above, there is a rational $\delta>0$ such that either $\exists^{\infty}n$, with $n^{-1}\sum_{i \leq n}$Tr$(\rho_{n}Q^{n}_{i})> \delta + M$ or  $\exists^{\infty}n$, with $n^{-1}\sum_{i\leq  n}$Tr$(\rho_{n}Q^{n}_{i})< -\delta + M$. Suppose first that the former holds and that $M<1$.
Define for all $n$,
\[C_{n}= \left\{\sigma : |\sigma|=n, n^{-1}\sum_{i\leq n} |\big< \sigma| Q^{n}_{i}|\sigma \big>|> (\delta  + 1) M\right\}.\] 
In other words,
\[C_{n}= \left\{\sigma : |\sigma|=n, n^{-1}[b|\{i \leq n: \sigma(i)=1\}| +a|\{i\leq n: \sigma(i)=0\} |]> (\delta  + 1)M\right\}.\] 
Suppose that $2^n$ is equipped with the product measure induced by letting $\mathbb{P}(0)=p, \mathbb{P}(1)=1-p$ (the product p-Bernoulli measure) and that $X_i$ is a random variable on $2^n$ which equals $a$ if $\sigma(i)=0$ and equals $b$ if $\sigma(i)=1$. Then, $\mathbb{E}X_i = M$, $C_n$ is the event
\[C_{n}= \left\{ n^{-1}\sum_{i=1}^n X_i> (\delta  + 1) M\right\},\]

and $\mathbb{P}(C_{n})  \leq \text{exp}(-\dfrac{2\delta^{2}}{(b-a)^2} nM^2)$ for all $n$ by the version of the Chernoff bound in Theorem 6(i) in the notes by Goemans\cite{Chernoff}.
Letting $S_n$ be the special projection, \[S_{n}:= \sum_{\sigma \in C_{n}} |\sigma\big>\big<\sigma|,\]

we see that \[\text{exp}(-\dfrac{2\delta^{2}}{(b-a)^2} nM^2)\geq \mathbb{P}(C_n)=\sum_{\sigma \in C_n} \mathbb{P}(\sigma)= \sum_{\sigma \in C_n}\big<\sigma|b_n|\sigma\big>=  b_p(S_n).\] So, $\sum_{n} b_p(S_{n})$ is a computable real number. $(S_n)_n$ is a computable sequence as $M$ and $\delta$ are computable. So, $(S_n)_n$ is a p-quantum Schnorr test. 
We need some intermediate results:
\begin{lem}
\label{lem:cor}
For a real number $B$ and a random variable $Y \leq B$, if $\mu <\mathbb{E}Y$, then \[\mathbb{P}\{Y \geq \mu\} \geq \dfrac{\mathbb{E}Y-\mu}{B-\mu}.\]
\end{lem}
\begin{proof}
As $B-Y\geq 0$ and $\mu < \mathbb{E}Y\leq B$, Markov's inequality\cite{Chernoff} gives that,
\[\mathbb{P}\{Y < \mu\}=\mathbb{P}\{B-Y > B-\mu\} \leq \dfrac{B-\mathbb{E}Y }{B-\mu}.\]
So,
\[\mathbb{P}\{Y \geq \mu\} \geq 1- \dfrac{B-\mathbb{E}Y }{B-\mu} = \dfrac{ \mathbb{E}Y - \mu}{B-\mu}.\]
\end{proof}
\begin{lem}
\label{lem:cor2}
Let $A$ be a Hermitian operator, $\rho$ a density matrix in a $n$ dimensional Hilbert space and let $e_1,..,e_n$ be an orthonormal eigenbasis for $A$ with $B\geq $max$_i \big<e_i |A| e_i\big>$ and $ \mu < m \leq $ Tr $(\rho A)$. Let $F_{\mu}$ be the orthonormal projector onto the subspace spanned by those $e_i$s with $\big<e_i |A| e_i\big> \geq \mu$. Then, Tr$(\rho F_{\mu}) \geq \dfrac{m-\mu}{B-\mu}$.
\end{lem}
\begin{proof}
Let $Y$ be the discrete random variable with $\mathbb{P}\{Y=\big<e_i |A| e_i\big>\}=\big<e_i |\rho| e_i\big>$. So, $\mathbb{E}Y $ equals the expected value when $A$ is measured on $\rho$ which in turn equals Tr$(A \rho)$. So, $\mathbb{E}Y = $ Tr$(A \rho)$.  Let $\mu<m\leq $ Tr$(A \rho)=\mathbb{E}Y$.  As $\mathbb{P}\{Y \geq \mu\}=$ Tr$(F_{\mu}\rho)$  and $Y\leq B$,  Lemma \ref{lem:cor} gives, Tr$( F_{\mu}\rho) \geq \dfrac{\mathbb{E}Y-\mu}{B-\mu}\geq \dfrac{m-\mu}{B-\mu}$. 
\end{proof}
Now fix one of the infinitely many $n$ such that  $n^{-1}\sum_{i \leq n}$Tr$(\rho_{n}Q^{n}_{i})> \delta + M$. Make the replacements in Lemma \ref{lem:cor2}: let $n \mapsto 2^n$, $\rho \mapsto \rho_n$, $A \mapsto n^{-1}\sum_{i \leq n} Q^{n}_{i} := A_n$, $e_1,..,e_{2^n}\mapsto $ the standard computational basis on $\mathbb{C}^{2^n}$, $B\mapsto z$ := max$\{a,b\}$. $M<1$ implies that
\begin{align}
\label{eq:del}
    &(1+\delta)M<M+\delta < \text{Tr}(A_n \rho). 
\end{align}
So, we can also make the replacements: $\mu \mapsto (1+\delta)M$ and $m \mapsto M+\delta $. With this, $S_n$ is exactly the projection onto the subspace spanned by those computational basis elements which are eigenvectors of $A_n$ with eigenvalues greater than $(1+\delta)M$. So, we also make the replacement $F_{\mu} \mapsto S_n$. By \ref{lem:cor2}, \[\text{Tr} (\rho_n S_n)\geq \dfrac{M+\delta-(1+\delta)M}{z-(1+\delta)M} \geq \dfrac{(1 - M)\delta}{|z|+|(1+\delta)M|}\geq C\delta,\]
where $\dfrac{(1 - M)}{|z|+|3M|}=C>0$ as $M<1$ and as we may assume that $\delta \leq 1$. As this holds for infinitely many $n$, $\rho$ fails the p-quantum Schnorr test. 

The proof of the case where $M \geq 1$ is omitted as it can be proved similarly to the above case after a simple linear transformation. This finishes the proof in the case where $\exists^{\infty}n$, with $n^{-1}\sum_{i\leq  n}$Tr$(\rho_{n}Q^{n}_{i})>  \delta + M$.
The proof of the other case (I.e.,  $\exists^{\infty}n$, with $n^{-1}\sum_{i\leq  n}$Tr$(\rho_{n}Q^{n}_{i})< -\delta + M$) is omitted too as it can be proved similarly to the above proof after an easy linear transformation.
\end{proof}
Taking $a=0,b=1$ and $p=0.5$ in Theorem \ref{thm:llngen} gives Theorem \ref{thm:lln}.
\section{Conclusion and outlook}

In addition to continuing the investigation of quantum Martin-L{\"o}f randomness, we have introduced quantum Solovay and quantum Schnorr randomness. In this section, we mention some of our unpublished results which are related to the present paper.

Effective measure theory (using `effectively null sets') and Kolmogorov complexity theory (using descriptive complexity of initial segments) are, albeit seemingly unrelated, equivalent approaches to study the randomness of bitstrings. This paper and others\cite{unpublished,qpl,prep,logicblog} have generalized the first approach to the quantum realm. In an attempt at generalizing the second approach, we have defined `quantum-K' (denoted by $QK$), a notion of prefix-free quantum Kolmogorov complexity\cite{prep}. Quantum-K is intended to be a quantum version of $K$, the prefix-free Kolmogorov complexity. For any $\epsilon>0$ and density matrix $\tau$, $QK^{\epsilon}(\tau)$ is a measure of the descriptive complexity of $\tau$ based on prefix-free classical Turing machines. Here, $\epsilon$ is an approximation term similar to the $\epsilon$ in $QC^ \epsilon$\cite{Berthiaume:2001:QKC:2942985.2943376}. All notions of quantum Kolmogorov complexity developed so far, with one exception \cite{Vitnyi2001QuantumKC}, use plain classical machines or quantum Turing machines (which are not prefix-free) \cite{Berthiaume:2001:QKC:2942985.2943376,Mller2007QuantumKC}. 

It is worth mentioning some connections of quantum Schnorr and weak Solovay randomness with $QK$. Weak Solovay random states have a characterization\cite{ prep} in terms of the incompressibility of their initial segments in terms of $QK^{\epsilon}$; $\rho$ is weak Solovay random if and only if  the initial segments of $\rho$ have high $QK^{\epsilon}$ in the limit. Precisely, \[\rho  \text{ is weak Solovay random} \iff \forall \epsilon >0, \text{lim}_{n} QK^{\epsilon}(\rho_{n})-n= \infty.\]Recalling that MLR coincides with Solovay randomness in the classical setting, we see that this is analogous to the following classical result attributed to Schnorr by Chaitin\cite{Chaitin:1987:ITR:24912.24913} :
\[X  \text{ is MLR} \iff \text{lim}_{n}  K(X\upharpoonright n)-n= \infty.\]

A prefix-free Turing machine, $C$ is said to be a computable measure machine if the Lebesgue measure of $\llbracket$dom$(C)\rrbracket$ is a computable real number. For a fixed $C$, we define $K_C$ for a finite bitstring $\sigma$ analogously to $K(\sigma)$: $K_C(\sigma):=$ min$\{|\tau|: C(\tau)=\sigma\}$\cite{downey2004schnorr}. Classical Schnorr randomness has a characterization in terms of $K_{C}$ \cite{downey2004schnorr}:
    \[X  \text{ is Schnorr R}\iff\]\[ \forall \text{ computable measure machines } C,  \exists d \forall n,  K_{C}(X\upharpoonright n)>n-d.\]
     We define $QK^{\epsilon}_{C}$, a version of $QK^{\epsilon}$ using the computable measure machine $C$ and show the following\cite{prep}:
    \[\rho \text { is quantum Schnorr R} \iff \]\[\forall \text{ computable measure machines } C, \forall \epsilon \exists d \forall n, QK^{\epsilon}_{C}(\rho_{n})>n-d.\]
We also have another characterization of quantum Schnorr randoms\cite{prep}: $\rho$ is quantum Schnorr random if and only if for all computable measure machines $C$ and all $\epsilon$, $\forall d \forall^{\infty} n$ QK$^{\epsilon}_{C}(\rho_{n})>n+d$. These connections with $QK$ seem to suggest that weak Solovay and quantum Schnorr randomness are `natural' quantum randomness notions. It still remains to find a complexity based characterization of q-MLR states. An interesting question is whether weak Solovay randomness is equivalent to q-MLR, a positive answer to which will yield a $QK$ based characterization of q-MLR. 

In other work\cite{qpl} we have shown that it is possible to `generate' classical randomness from a quantum source which is not quantum random. It hence seems that it is strictly easier to generate classical randomness than it is to generate quantum randomness. In particular, we construct a computable, non q-MLR state which yields an arithmetically random real with probability one when `measured'\cite{qpl}. This is important given the widespread use of random bits in fields such as cryptography and computer science and as arithmetic randomness is considered to be `true' randomness (as against pseudorandomness)\cite{misc1}. It would be interesting to generalize this result to quantum Solovay and Schnorr randoms.

\section{Acknowledgements}
I thank my thesis advisor Joseph S. Miller for his constant encouragement, support and advice and Andr{\'e} Nies for many helpful discussions. I am indebted to an anonymous referee for several insightful comments and for suggesting simplifications to our original proofs of Theorems \ref{thm:LinAStr} and \ref{thm:000} and for generalizing our original proof of Theorem \ref{thm:lln}. An older draft of this paper\cite{bhojraj2020quantum} contains the proofs in their previous form, prior to incorporating the modifications suggested by the referee.

\bibliographystyle{plain}
\bibliography{references.bib}

\end{document}